\documentclass[a4paper, USenglish, cleveref, nameinlink, autoref]{lipics-v2021}

\usepackage{graphicx}
\usepackage[utf8]{inputenc}
\usepackage{xspace}
\usepackage{todonotes}
\usepackage{mathtools}
\usepackage{orcidlink}
\usepackage{booktabs}
\usepackage[inline]{enumitem}
\usepackage{subcaption}
\usepackage{cite}
\usepackage{amsthm}
\usepackage{comment}
\usepackage[absolute,overlay]{textpos}

\crefname{appendix}{Appendix}{Appendices}
\crefname{theorem}{Thm.}{Thms.}
\crefname{theorem}{Theorem}{Theorems}
\crefname{corollary}{Cor.}{Cors.}
\crefname{corollary}{Corollary}{Corollaries}

\graphicspath{{figures/}}

\definecolor{defblue}{rgb}{0.121,0.47,0.705}
\DeclareTextFontCommand{\emph}{\color{defblue}\em}
\hypersetup{colorlinks=true,
    linkcolor=defblue,
    anchorcolor=defblue,
    citecolor=defblue,
    filecolor=defblue,
    menucolor=defblue,
    urlcolor=defblue,
    bookmarksopen=true,
    bookmarksopenlevel=2,
    bookmarksnumbered=true,
    plainpages=false
    }

\newcounter{slope}
\crefname{slope}{}{}
\crefformat{slope}{#2#1#3}
\crefmultiformat{slope}{#2#1#3}{, #2#1#3}{, #2#1#3}{, #2#1#3}
\makeatletter
\newcommand{\slopeitem}[2]{%
  \item[$#1$]%
  \refstepcounter{slope}%
  \protected@edef\@currentlabel{$#1$}%
  \label{#2}%
}
\makeatother

\title{On the \texorpdfstring{$2$}{2}-Bend Slope Number of \texorpdfstring{$1$}{1}-Planar Graphs\\
\small{The Slopebusters give the first upper bound}}

\titlerunning{On the \texorpdfstring{$2$}{2}-Bend Slope Number of \texorpdfstring{$1$}{1}-Planar Graphs}

\author{Michael A. Bekos}{University of Ioannina, Greece \and \url{https://myweb.uoi.gr/bekos/} }{bekos@uoi.gr}{https://orcid.org/0000-0002-3414-7444}{}
\author{Eleni Katsanou}{National Technical University of Athens,  Greece} {ekatsanou@mail.ntua.gr}{https://orcid.org/0000-0002-1001-1411}{}
\author{Philipp Kindermann}{Trier University, Germany \and \url{https://algo.uni-trier.de/~kindermann}} {kindermann@uni-trier.de}{https://orcid.org/0000-0001-5764-7719}{}
\author{Aikaterini Maria Ntasiou}{University of Ioannina, Greece \and \url{https://algo.math.uoi.gr/maritina/}}{m.ntasiou@uoi.gr}{https://orcid.org/0009-0006-7481-3268}{}
\author{Maria Eleni Pavlidi}{University of Ioannina, Greece \and \url{https://algo.math.uoi.gr/marialena/index.html}}{m.e.pavlidi@uoi.gr}{https://orcid.org/0009-0009-4500-0112}{}
\author{Soeren Terziadis}{TU Munich, Germany}{sterziadis@ac.tuwien.ac.at}{https://orcid.org/0000-0001-5161-3841}{}

\authorrunning{M. A. Bekos, E. Katsanou, P. Kindermann, A. M. Ntasiou, M. E. Pavlidi, S. Terziadis}

\Copyright{Michael A. Bekos, Eleni Katsanou, Philipp Kindermann, Aikaterini Maria Ntasiou, Maria Eleni Pavlidi, Soeren Terziadis}

\acknowledgements{The work was partially funded by the IKYDA 2025 funding program of the DAAD (project 57775422 ``Algorithms and Combinatorics for Drawing Graphs with Few Slopes'').}

\ccsdesc[500]{Mathematics of computing~Graph algorithms, Graph Theory}

\keywords{$k$-bend drawings, 1-planar graphs, beyond planarity, slope number}

\nolinenumbers
\hideLIPIcs

\begin{document}

\maketitle

\begin{abstract}
\begin{textblock*}{6.5cm}(16.9cm,4.95cm)
\includegraphics[width=0.35\textwidth]{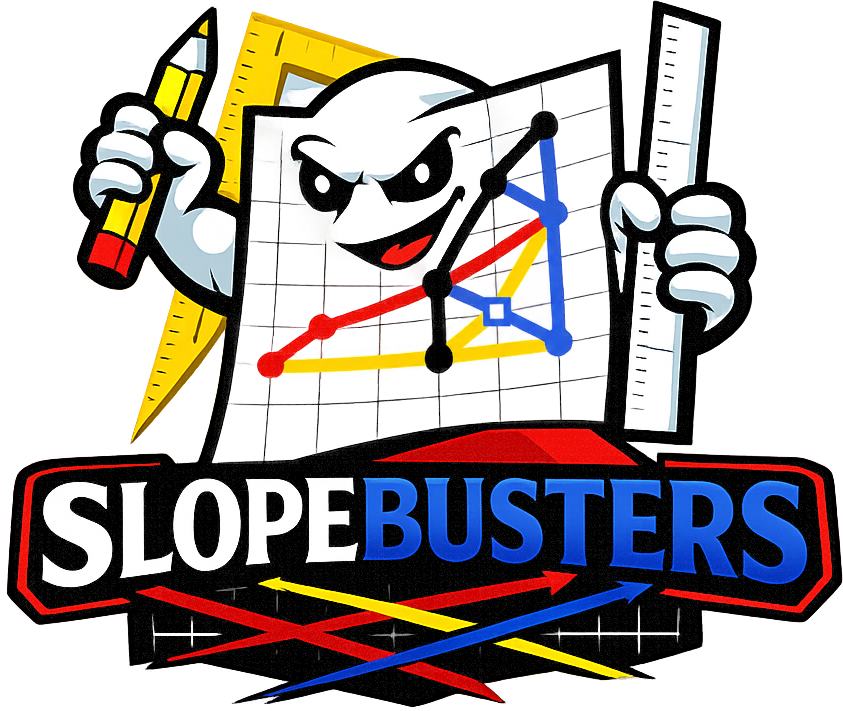}
\end{textblock*}
While drawing planar graphs with few slopes and few bends is a well-studied problem, corresponding extensions to beyond-planar graphs still remain mostly unexplored. 
Motivated by this observation, in this work, we provide bounds on the slope number of biconnected 1-planar graphs when two bends are allowed along each edge. Our contribution is an incremental drawing algorithm that produces 2-bend 1-planar drawings of biconnected 1-plane graphs with maximum degree $\Delta$ using \textit{any} prescribed set of $\Delta$ pairwise distinct slopes.
\end{abstract}

\section{Introduction}
The problem of computing drawings of graphs in which the edges use only a limited number of slopes has received considerable attention in Graph Drawing, due to both its theoretical significance and practical relevance in areas such as network visualization and VLSI design \cite{BattistaETT99,Juenger04,2013gd}.
Given a graph, algorithmically one seeks to minimize (or to compute bounds on) its \emph{slope number}, that is, the number of distinct slopes used by its edge segments, taken over all polyline drawings of the graph with a prescribed number of bends per edge. Among the first to investigate this problem were Wade and Chu \cite{WadeChu1994}, who proved that the slope number of $K_n$ in the straight-line setting is $n$. Since then, the problem has been extensively investigated with several results known for graphs of bounded degree~\cite{DujmovicSW07, KeszeghPPT06,KeszeghPPT08, MukkamalaP11, MukkamalaSzegedy2009, PachP06}.

For planar input graphs, one usually seeks to find corresponding crossing-free drawings.
Several results have established tight or near-tight bounds on the planar slope number, depending on the maximum degree of the graph and on the number of bends per edge allowed~\cite{BekosGDLM22, KlawitterM22, KlawitterZ21,KlawitterZ23,ChaplickLGLM21,ChaplickLGLM24, DiGiacomoLM2014,GiacomoLM18, GiacomoLM16,GiacomoLM20, DujmovicESW07, JelinekJKLTV09,JelinekJKLTV13, KPP2010GD,KPP2013, KnauerMW12,KnauerMW14, LenhartLMN23, BrucknerKM19,BrucknerKM22,KnauerW16}. Additionally, trade-offs between the number of slopes, the number of bends, and drawing area have been proposed, yielding efficient  algorithms to produce corresponding drawings~\cite{BekosKKP26, BiedlKant1998, BekosGKK14,BekosGKK15,Tamassia1987}. By now, the required number of slopes in combination with other parameters in the planar setting is fairly well~understood.

In the non-planar setting, Pach and Pálvölgyi~\cite{PachP06} showed that graphs of maximum degree $5$ can have arbitrarily large slope number, while if one bend per edge is allowed
Dujmovi\'c, Suderman, and Wood~\cite{DujmovicSW07} proved that every graph of maximum degree~$\Delta$ admits a drawing using $\Delta+1$ slopes. However, if the input graph belongs to a particular graph class, then the output drawing does not necessarily respect the defining properties of the class. A prominent example in this regard is the class of 1-planar graphs, which have a drawing where each edge is crossed at~most once. 
Introduced by Ringel~\cite{Ringel1965}, 1-planar graphs 
have been the subject of extensive research~\cite{DidimoLM19,KobourovLM17}. However, their slope number has been studied to a limited extent. To the best of our knowledge, the only non-trivial bounds on their slope numbers concern subclasses. Kindermann, Montecchiani, Schlipf, and Schulz~\cite{KindermannMSS18,KindermannMSS21} showed that triconnected cubic 1-planar graphs admit 1-planar drawings with one bend per edge using four slopes.~Di~Giacomo, Liotta, and Montecchiani~\cite{GiacomoLM14,GiacomoLM15}~showed that outer 1-planar graphs of degree $\Delta$ admit outer 1-planar drawings with~one~bend per edge using $O(\Delta)$ slopes. Argyriou, Cornelsen, F{\"{o}}rster, Kaufmann, N{\"{o}}llenburg, Okamoto, Raftopoulou, and Wolff~\cite{ArgyriouCF0NORW18} presented an algorithm to draw 1-plane graphs with maximum degree four orthogonally (i.e., with two slopes) and with at most three bends per edge.
Most existing results, however, in non-planar settings focus either on dense graphs, such as complete and complete~bipartite graphs~\cite{WadeChu1994,DujmovicSW07}, or on graphs of small degree~\cite{KeszeghPPT06,KeszeghPPT08,MukkamalaP11}. As a consequence, sparse yet structured non-planar graph families remain largely unexplored.

This lack of results highlights a gap in the literature: \textit{while the slope number problem~is well studied in the planar case, its extension to beyond-planar graph classes is still in its infancy}. Understanding how different crossing restrictions affect the number of slopes, the required number of bends per edge, and the drawing area is a challenging and largely unexplored research line, which we seek to investigate. To this end, we leverage a central technique for computing drawings with few slopes and few bends per edge that was introduced in~\cite{AngeliniBLM19} by extending previous techniques for graphs of low degree~\cite{BiedlKant1998,BekosGKK14,BekosGKK15}. Initially, it was applied to planar graphs of degree~$\Delta$~\cite{AngeliniBLM19}, but later it was extended to produce corresponding drawings for directed planar graphs~\cite{BekosGDLM18}, for triconnected cubic $1$-planar graphs~\cite{KindermannMSS18,KindermannMSS21}, and recently to study the interplay between the $k$-bend planar slope number and the area requirements~\cite{BekosKKP26}.

\subparagraph{Our contribution.} 
We extend the aforementioned technique and adapt it to the $2$-bend setting for $1$-planar graphs of arbitrary degree.
In~\cref{sec:triconnected}, we consider $3$-connected $1$-planar graphs.
Given such a graph of maximum degree $\Delta$ and a 1-planar embedding of it, we construct a $2$-bend $1$-planar drawing using an arbitrary (prescribed) set of $\Delta$ slopes, not necessarily maintaining the given embedding.
The drawing is obtained via an incremental algorithm based on a canonical ordering~\cite{Kant96} of a planarization of the input $1$-planar graph.
In~\cref{sec:biconnected}, the result is extended to biconnected $1$-planar graphs by exploiting SPQR-tree decompositions~\cite{BattistaT90}. Our main result is as~follows.

\begin{theorem}\label{thm:main}
The $2$-bend slope number of $2$-connected $1$-planar graphs with degree $\Delta$ is at most $\Delta$.    
\end{theorem}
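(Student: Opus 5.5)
The proof has two stages. First, \cref{sec:triconnected} handles $3$-connected $1$-plane graphs. Second, \cref{sec:biconnected} lifts that result to the biconnected case through SPQR-trees. Throughout, fix an arbitrary set $S$ of $\Delta$ pairwise distinct slopes. Rotate so that one slope is horizontal and order the rest cyclically, so that every vertex has $2\Delta$ \emph{ports} (a ray for each slope in each direction).

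\textbf{Triconnected case.} Take a 1-planar embedding and planarize it by replacing each crossing with a degree-4 dummy vertex. After standard augmentation (e.g.\ inserting the kite edges around crossings where possible), the planarization is triconnected, so it admits a canonical ordering $v_1,\dots,v_n$ in the sense of~\cite{Kant96}.

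We insert the vertices, or chains of vertices, one at a time above the current contour and maintain the following invariant:
\begin{itemize}
\item the contour is $x$-monotone;
\item each pending edge from a contour vertex to a not-yet-placed vertex leaves through a prescribed port and is currently drawn as a vertical (or steep) segment upward;
\item edges already drawn use at most two bends with slopes from $S$.
\end{itemize}

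When a new vertex $v$ with leftmost and rightmost contour neighbours $\ell$ and $r$ is placed, it goes high enough above the contour. Its incoming edges take the ports of $v$ in the lower half, assigned in left-to-right order. Each such edge is routed as: vertical segment from the contour vertex, a bend, a segment with the slope of the port it enters at $v$, and possibly a second bend. Upward pending edges of $v$ are assigned the remaining upper ports.

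Since $\deg(v)\le\Delta$ and there are $2\Delta$ ports, the lower $\Delta$ ports can serve all incoming edges and the upper ones all outgoing edges. The one exception is the first vertex, which needs special care together with $v_1v_2$. Degree-4 dummy vertices have only 4 edges and are easy to handle.

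Finally, each dummy vertex is replaced by a genuine crossing. The two edges through a dummy must continue straight, i.e.\ they must use opposite ports. I would enforce this by fixing, at dummy vertices, an opposite-port assignment in the invariant. I also need to check that the bend count of an edge split by a dummy is still two overall. This is delicate, so I would let the half-edges incident to the dummy be straight, with a bend only on the real-vertex side. Stretching (shifting everything right of a vertical line) keeps the construction planar, and hence the final drawing 1-planar.

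\textbf{Biconnected case.} Build the SPQR-tree of the planarization, or of $G$ together with a fixed 1-planar embedding. Draw the skeletons of R-nodes using the triconnected result, and P- and S-nodes directly. Then substitute each pertinent graph into its virtual edge, inductively from the root. A virtual edge $uv$ of a skeleton is drawn with $\le 2$ bends, so the child component must fit inside a thin region around it, attached at $u$ and $v$ with port sets that are consecutive in the rotation. The strengthened statement to prove is: every biconnected component with poles $u,v$ can be drawn inside a prescribed narrow ``wedge/parallelogram'' so that $u$ and $v$ use only the ports reserved for their degree inside it. Since the degrees of $u$ across the components attached there sum to at most $\Delta$, the port budget works out.

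\textbf{Main obstacle.} I expect the crux to be the interplay between crossings and the 2-bend budget: an edge passing through a dummy vertex must stay straight at the crossing, yet it consumes port constraints at two canonical-ordering steps. The second difficulty is making the triconnected construction flexible enough (arbitrary prescribed slopes, arbitrary outer ports for the poles) to be plugged into SPQR-compositions. I would try to resolve the first by choosing the canonical ordering so that each dummy is placed in a step where both of its lower edges come from already-placed vertices, then forcing its upward edges onto the opposite ports.
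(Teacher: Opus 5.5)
Your architecture (planarize, canonical order, then SPQR-tree) matches the paper's. The mechanism that makes the triconnected construction work is missing, though, and the substitute you propose fails.

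\textbf{Triconnected case.} Shifting everything to the right of a vertical line does not preserve slopes: every non-horizontal segment crossed by the line gets sheared or broken. The paper stretches only along $y$-monotone cuts that cross \emph{horizontal} segments exclusively. It therefore maintains invariant~\ref{inv:cut}: every contour edge except the last edge of a dummy vertex carries a horizontal segment. This dictates the asymmetric bend distribution at a crossing:
\begin{itemize}
\item the half-edge from the dummy to its last neighbor is bendless; its port at the dummy is chosen opposite to the port it uses at the last neighbor, and last edges are the only contour edges exempt from the cut invariant;
\item the opposite, critical half absorbs up to two bends (slope--horizontal--slope);
\item the other crossing edge uses the two horizontal ports at the dummy, with at most one bend per half.
\end{itemize}
Your symmetric rule (at most one bend per half, on the real side) leaves the half-edge entering the dummy through a non-horizontal port with no horizontal segment while it lies on the contour. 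No cut can pass through it. Without cuts you cannot create the horizontal spacing needed to route all incoming edges of a high-degree vertex into its prescribed bottom ports. Placing it ``high enough'' is not sufficient for arbitrary slopes; the paper must even make the rays $\rho'_i$ concurrent by two stretching sweeps.

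You give no argument that a canonical order exists in which every dummy has exactly two predecessors, and the paper does not rely on one. Dummies can lie inside chains, be singletons with three predecessors, or be $v_n$, and each case is treated separately. Finally, inserting kite edges can raise degrees above $\Delta$, which invalidates your port count. The paper instead re-embeds (\cref{lem:planarization}) without adding edges.

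\textbf{Biconnected case.} You miss that separation pairs of the planarization may contain dummy vertices. Your invariant (a component drawn in a thin region, with poles using consecutive ports reserved according to their degree inside it) presumes real poles with freely assignable ports. At a dummy pole the ports come in forced opposite pairs. One crossing edge through it straddles the split, so its two-bend budget must be shared between the parent and child drawings.

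The paper needs the following to handle this:
\begin{itemize}
\item Property~\ref{prp:poles} of \cref{lem:planarization}: a dummy pole has one neighbor on one side and three on the other, and is not adjacent to the other pole;
\item augmented nodes with satellite $Q$-nodes;
\item reduced nodes whose new real pole is placed on the host horizontal segment of a possibly dummy--dummy virtual edge;
\item temporary auxiliary edges, plus a temporary extra slope to accommodate them, to make augmented $R$-skeletons triconnected;
\item the extended invariants \ref{inv:new-cut} and \ref{inv:new-dummy};
\item a separate treatment when no separation pair consists of two real vertices.
\end{itemize}

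The port budget is also not just that degrees sum to at most $\Delta$. Reserved ports must be consecutive and compatible with the top/bottom/horizontal roles of the canonical-order assignment; the paper exploits the spare top port of chain vertices. Children are inserted bottom-up as stretchable chips whose pins carry horizontal segments, so each pole-to-pin connection costs exactly one extra bend via \cref{lem:poles-addition}. Fitting a child top-down into a thin region around an already drawn, generally slanted, virtual edge gives no such bend guarantee.
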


\section{Preliminaries}\label{sec:preliminaries}

We assume familiarity with basic Graph Drawing terms and techniques \cite{BattistaETT99, 2013gd}.
Unless stated otherwise, all graphs considered are simple and undirected. The \emph{degree} of a vertex is the number of its neighbors. A graph has \emph{maximum degree}~$\Delta$ if it contains a vertex of degree~$\Delta$ and no vertex of degree greater than~$\Delta$. A graph is \emph{connected} if every pair of vertices is joined by a path. More generally, for $k \ge 1$, a graph is \emph{$k$-connected} if the removal of any set of at most $k-1$ vertices leaves the graph connected. In particular, $2$- and $3$-connected graphs are also referred to as \emph{biconnected} and \emph{triconnected}, respectively.

A \emph{drawing} of a graph maps each vertex of the graph to a point of the Euclidean plane and each of its edges to a Jordan arc connecting its endpoints. A drawing is \emph{planar} if no two edges intersect except possibly at common endpoints. Such a drawing partitions the plane into connected regions called \emph{faces}; the unbounded one is the \emph{outer face}. A graph is \emph{planar} if it admits a planar drawing. A \emph{planar embedding} of a planar graph is an equivalence class of planar drawings that define the same set of faces and the same outer face. A \emph{plane} graph is a planar graph with a fixed planar embedding. A drawing is \emph{$k$-bend} if each of its edges is a polygonal chain composed of at most $k+1$ straight-line segments. The point where two such segments meet is called a \emph{bend}. 
A graph is called \emph{1-planar} if it admits a drawing in the plane in which each edge is crossed at most once. 
A \emph{1-planar embedding} is a combinatorial description of the faces of a 1-planar drawing.
A \emph{1-plane} graph is a graph together with a fixed such embedding.
Given a 1-plane graph, its \emph{planarization} is the plane graph obtained by replacing each crossing with a \emph{dummy} vertex; non-dummy vertices in the planarization are referred to as \emph{real} vertices.  

The \emph{slope} of a line measures its steepness and direction. 
It is defined as the ratio of the vertical change (\emph{rise}) to the horizontal change (\emph{run}) between any two points on the line. 
This equivalently corresponds to the tangent of the counterclockwise angle through which a horizontal line must be rotated to coincide with the given line. 
The \emph{horizontal} (\emph{vertical}) \emph{slope} is the slope of a line parallel (perpendicular) to the $x$-axis. 
The slope of an edge segment is the slope of the line containing it. 
Given a set of slopes $S$, a $k$-bend planar drawing is said to be \emph{on $S$} if each of its edge segments has a slope belonging to $S$. 
For each vertex $v$ of a $k$-bend planar drawing on $S$ and each slope $s\in S$, the two opposite directions determined by $s$ at $v$ define two attachment points, called the \emph{ports} of slope $s$ at $v$.
Geometrically, each port is represented by the ray emanating from~$v$ in the corresponding direction.
If $s$ is the horizontal slope, these rays are called \emph{horizontal}. 
The upward (downward) directed ports are called \emph{top} (\emph{bottom}) ports. 
We say that a port~$\pi_v$ incident to $v$ is \emph{free} if no edge incident to $v$ is drawn along $\pi_v$; otherwise,~$\pi_v$ is \emph{occupied}.  
A set of slopes $S$ is called \emph{universal} for a family of graphs $\mathcal{G}$ and a drawing style $\mathcal{D}$ if every graph $G \in \mathcal{G}$ admits a $\mathcal{D}$-style drawing on~$S$. 
W.l.o.g.\ we assume that $S$ contains the horizontal slope (otherwise, we rotate the slope set). 

Given an edge $e$ on the outer face of a $k$-bend planar drawing $\Gamma$, a \emph{cut at $e$} is a strictly \mbox{$y$-monotone} curve that
\begin{enumerate*}[label=(\roman*)]
\item starts at a point on a horizontal segment of $e$,
\item ends on a horizontal segment of another edge $e'$ of the outer face of $\Gamma$, $e' \neq e$, and
\item intersects only horizontal segments of $\Gamma$; see \cref{fig:schnyder}.
\end{enumerate*}
Such a cut allows to \emph{stretch}~$\Gamma$ horizontally by translating~all vertices and edges on one side of the cut horizontally by any distance $d>0$, thereby increasing the horizontal distance between the two parts without introducing crossings or changing~the order of the edges around any vertex. Since the stretching is affecting horizontal edge segments, the slopes of all non-horizontal segments remain unchanged, while the lengths of the horizontal segments crossed by the cut increase. So, if $\Gamma$ is on $S$ before~stretching, it remains on $S$ afterwards. When we say that we \emph{stretch an edge}, we refer precisely~to~this~operation.

\begin{figure}
    \centering
    \includegraphics[page=4]{schnyder}
    \caption{A canonical order and a $4$-coloring of a triconnected planarization of a $1$-plane graph. The dotted curves represent two cuts. Edge $(v_3,v_6)$ is the last edge of dummy vertex $v_3$, while $(v_1,v_3)$ is the corresponding critical edge. (Note that this is not a drawing produced by our algorithm.)}
    \label{fig:schnyder}
\end{figure}

\medskip\noindent\textbf{Canonical ordering.}
Let $G$ be a triconnected $n$-vertex plane graph and let $\Pi = (P_0,\ldots,P_m)$ be a partition of its vertex set into paths such that $P_0=\{v_1,v_2\}$, $P_m=\{v_n\}$, the edges $(v_1,v_2)$ and $(v_1,v_n)$ exist and belong to the outer face of $G$. 
For $i=0,\ldots,m$, let $G_i$ be the subgraph induced by $P_0\cup\ldots \cup P_i$ and denote by $C_i$ the \emph{contour} of $G_i$ defined as follows:  If $i=0$, then $C_0$ is the edge $(v_1,v_2)$ of $P_0$, while if $i >0$, then $C_i$ is the path from $v_1$ to $v_2$ obtained by removing $(v_1,v_2)$ from the cycle delimiting the outer face of $G_i$. 
We say that $\Pi$ is a \emph{canonical order}~\cite{FraysseixPP90,Kant96} of $G$ if for each $i=1,\ldots,m-1$ the following hold (see~\cref{fig:schnyder}): %
\begin{enumerate*}[label=(\roman*)]
\item $G_i$ is biconnected, internally triconnected and embedded with $C_i \cup \{(v_1,v_2)\}$ as its outer face; 
\item all neighbors of $P_i$ in $G_{i-1}$ are on $C_{i-1}$; 
\item $P_i$ either consists of a single vertex (called \emph{singleton}), or the degree of each of its vertices is $2$ in $G_i$ (called \emph{chain});
\item every vertex in $P_i$ has at least one neighbor in $P_j$ with $j>i$.
\end{enumerate*}
A canonical order of a triconnected planar graph can be computed in linear time~\cite{Kant96}.
We write $v_a \prec v_b$ to denote that $v_a\in P_i$, $v_b\in P_j$ and $i<j$.
For a vertex $v_h$ of $P_i$ with $1 \leq i < n$, its \emph{last neighbor} is the unique neighbor $v_g$ contained in the path $P_j$ with maximum index $j$ in $\Pi$. The corresponding edge $(v_h,v_g)$ is the \emph{last edge} of $v_h$. For a dummy vertex, the edge opposite of the last edge in the circular order of the edges around the vertex is called the \emph{critical} edge.

\medskip\noindent\textbf{4-edge coloring.} For the pair $\langle G, \Pi \rangle$, it is possible to compute a 4-edge coloring similar to the one by Schnyder~\cite{Felsner04,Schnyder90}. The edge $(v_1,v_2)$ of~$G_0$ is colored black. For $i = 1,\ldots,m$, a 4-coloring of $G_{i-1}$ is extended to one of $G_i$ as follows (see, e.g., \cref{fig:schnyder}).
First, consider the edges of $G_i$ that do not belong to $G_{i-1}$ and lie on the contour $C_i$. The first (last) such edge encountered on a traversal of $C_i$ from $v_1$ to $v_2$ is colored blue (green, respectively), while all remaining ones (i.e., those connecting vertices within $P_i$ when $P_i$ forms a chain) are colored black. 
The remaining edges of $G_i$ that do not belong to $G_{i-1}$ are colored red; these are precisely the edges incident to~$P_i$ in $G_i$ that are not part of $C_i$ (i.e., these are only present if $P_i$ is a singleton of the canonical order).
Finally, we treat all black edges as undirected, and all remaining edges as directed according to $\prec$.

\medskip\noindent\textbf{SPQR-tree decomposition.} Let  $G$ be a biconnected graph. The \emph{SPQR-tree $\mathcal{T}$} of $G$ represents a recursive decomposition of $G$ based on its split pairs~\cite{BattistaETT99}. The nodes of $\mathcal{T}$ correspond to components of this decomposition and have four types:~$S$,~$P$,~$Q$,~and~$R$;~see~\cref{fig:spqr}.

\begin{figure}
    \centering
    \includegraphics[width=0.9\linewidth]{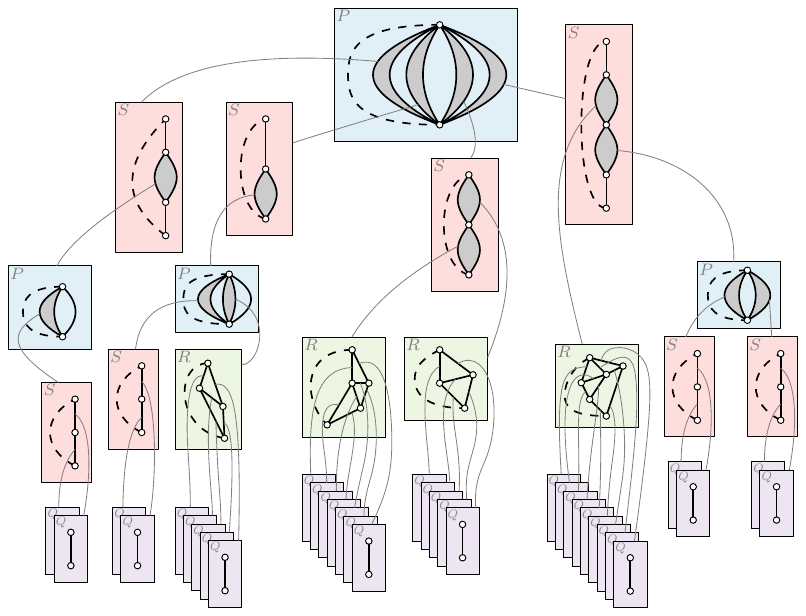}
    \caption{Illustration of an SPQR-tree decomposition.}
    \label{fig:spqr}
\end{figure}

Every split pair defines a node $\mu\in \mathcal{T}$; the vertices of the split pair are called the \emph{poles} of the node. Every node is associated with a biconnected multigraph, called the \emph{skeleton} of~$\mu$. The skeleton contains special edges, called \emph{virtual edges}, that represent connections to adjacent nodes of $\mathcal{T}$. More precisely, for every node $\nu$ adjacent to $\mu$ in $\mathcal{T}$, there is a virtual edge in the skeleton of $\mu$ corresponding to $\nu$.
Depending on the structure of its skeleton, a node $\mu$ is:   \begin{enumerate*}[label=(\roman*)]
    \item an \emph{$S$-node} if the skeleton of $\mu$ is a simple cycle,
    \item a \emph{$P$-node} if the skeleton of $\mu$ consists of parallel edges between its poles,
    \item a \emph{$Q$-node} if the skeleton of $\mu$ consists of a single edge, or
    \item an \emph{$R$-node} if the skeleton of $\mu$ is a triconnected graph.
  \end{enumerate*}
No two $S$-nodes and no two $P$-nodes are adjacent in~$\mathcal{T}$.

The \emph{pertinent graph} of a node $\mu$, denoted by $G_\mu$, is the subgraph of $G$ induced by the subtree of $\mathcal{T}$ rooted at $\mu$. It is defined recursively as follows: if $\mu$ is a $Q$-node, then $G_\mu$ consists of a single edge between its poles; otherwise, $G_\mu$ is obtained from the skeleton of $\mu$ by replacing each virtual edge corresponding to a child $\nu$ with the pertinent graph $G_\nu$.

\begin{lemma}
    \label{lem:planarization}
    Let $H$ be a $1$-plane graph and $G$ its planarization. Then we can reembed $H$ such that each edge is still crossed at most once and 
  \begin{enumerate*}[label=(\roman*)]
    \item\label{prp:bicon}no cut vertex of its planarization is a dummy vertex, and
    \item\label{prp:tricon}if~$H$ is $k$-connected with $k \in \{2,3\}$, then its planarization is also $k$-connected, and
    \item\label{prp:poles} if a pole $d$ of a separation pair $\langle d,t\rangle$ is a dummy vertex, then $(d,t)$ is not an edge of $G$, removing $\{d,t\}$ yields exactly two connected components, and $d$ has exactly one neighbor in one of these components and exactly three neighbors in the other.
  \end{enumerate*}
\end{lemma}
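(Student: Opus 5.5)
Starting from the given $1$-planar embedding of $H$, I will repeatedly apply local re-embedding operations that keep every edge crossed at most once and never increase the number of crossings. Each time one of the three properties fails, I exhibit such an operation that strictly decreases the number of crossings, or keeps it and decreases a secondary potential such as the number of ``bad'' separation pairs. The process then terminates in an embedding with all three properties. The basic tool is the following. Let $d$ be a dummy vertex at the crossing of edges $(a,b)$ and $(c,e)$, with cyclic order $a,c,b,e$ around $d$. Suppose $\{d,x\}$ (or $\{d\}$ alone) separates the planarization $G$ so that the four half-edges at $d$ are split among different components. Then one side can be flipped, i.e.\ mirrored and reinserted, so that the two edges through $d$ no longer need to cross. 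The crossing is removed, and no new crossings are created, because the flipped part is contained in a region bounded only by $d$ and $x$.

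For~\ref{prp:bicon}, suppose a dummy vertex $d$ is a cut vertex of $G$. Since $d$ has degree $4$, its neighbours $a,c,b,e$ are partitioned among at least two components of $G-d$. The endpoints of the same original edge, say $a$ and $b$, cannot be separated in $H$ unless $(a,b)$ is itself a bridge-like connection. Using this together with connectivity of $H$, I will show that some component contains exactly one or two consecutive neighbours of $d$. If one component holds $\{a,c\}$ and the rest holds $\{b,e\}$, I redraw $(a,b)$ and $(c,e)$ so that they do not cross: I swap the rotation at $d$ by flipping the component, and then the crossing can be eliminated. If a component holds only $a$, I again flip that component inside the face, which lets $(a,b)$ avoid $(c,e)$. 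In either case the crossing count drops.

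For~\ref{prp:tricon}, once~\ref{prp:bicon} holds and $H$ is $2$-connected, any cut vertex of $G$ would be a real vertex $v$. But a real vertex separating $G$ also separates $H$, because every dummy vertex lies on paths between real vertices, and this contradicts biconnectivity. For $k=3$, I consider a separation pair of $G$. If both vertices are real, they form a separation pair of $H$, a contradiction. If one or both are dummy, I apply the flip operation described above to remove the offending crossing, using $3$-connectivity of $H$ to argue that the four neighbours of each dummy split as $1+3$ or $2+2$ and that the flip is legal.

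For~\ref{prp:poles}, given a pair $\langle d,t\rangle$ with $d$ dummy, I treat three situations. If $(d,t)$ is an edge of $G$, then $t$ is an endpoint of one of the crossing edges, and re-routing along $t$ removes the crossing. If there are at least three components, or if $d$ splits its neighbours $2+2$ between the two components, the half-edges can be rerouted to uncross, again decreasing the crossing count. The remaining case is exactly a $1+3$ split into two components.

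I expect the main obstacle to be the case analysis for~\ref{prp:tricon} and~\ref{prp:poles} with two dummy poles or a $2+2$ split. There I must check carefully that the flip yields a valid $1$-planar embedding and does not create new crossings on other edges. I will first try to argue via the face structure containing both poles: the two poles share two faces, the flipped part is bounded by them, and so the flip is a local mirror operation.
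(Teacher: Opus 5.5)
Your arguments for (ii) with $k=2$ and for (iii) are essentially the paper's. You flip the side holding two neighbours of $d$, reorder the components around $t$ when there are at least three, and iterate because each step removes a crossing.

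The genuine gap is in (ii) for $k=3$. The paper does not prove this case itself; it cites Kindermann et al. Your ``basic tool'' is false as stated. Suppose $\{d,t\}$ splits $G$ into exactly two components, with $a$ on one side and $c,b,e$ on the other, and rotation $a,c,b,e$ at $d$. Flipping the side of $a$ leaves the rotation at $d$ unchanged. Flipping the other side turns it into $a,e,b,c$, in which $\{a,b\}$ and $\{c,e\}$ still alternate. So no flip removes this crossing, which is exactly why you (and the paper) keep the $1+3$ case in (iii). Yet in the $k=3$ paragraph you propose to remove precisely this crossing by a flip, so that step fails. Also, 3-connectivity of $H$ is not what makes the split $1+3$ or $2+2$; that follows from planarity and the number of components.

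The missing idea is to use 3-connectivity to exclude the $1+3$ case outright. Let $K_1$ be the component containing only the neighbour $a$ of $d$, and let $t$ be real. An edge of $H$ joining a real vertex of $K_1-a$ to one of $K_2$ would have to pass through a dummy vertex adjacent to both components. That can only be $d$, whose sole neighbour in $K_1$ is $a$. Hence $\{a,t\}$ separates $H$, unless $K_1$ contains no real vertex besides $a$. In that case every dummy vertex in $K_1$ would need four real neighbours in $\{a,t\}$, which is impossible, so $K_1=\{a\}$ and $a$ has degree at most $2$ in $H$. If $t$ is also dummy, the same reasoning with the lone neighbours of $d$ and $t$ gives a $2$-separator of $H$ or a vertex of degree at most $2$. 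Combined with your real--real case and the $2+2$ flip, this gives 3-connectivity of $G$.

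A smaller imprecision is in (i). A component of $G-d$ containing only $a$ hangs on a single edge, so flipping it does not change the rotation at $d$. What works is re-inserting that component into a different angle at $d$, for instance between $b$ and $e$. This is legal because $d$ alone separates it. That freedom is exactly what you lose when the separator is $\{d,t\}$, and it is why the $1+3$ case survives in (iii).
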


\begin{proof}
Property~\ref{prp:bicon} as well as Property~\ref{prp:tricon} for $k=3$ are shown by Kindermann, Montecchiani, Schlipf, and Schulz~\cite{KindermannMSS21}. Property~\ref{prp:bicon}, however, also implies Property~\ref{prp:tricon} for $k=2$. To see this, assume that the planarization of $H$ is not $2$-connected, that is, it contains a cutvertex $v$. By Property~\ref{prp:bicon}, vertex $v$ is a real vertex. Hence, $v$ is a cutvertex in the planarization of $H$ but not in $H$, which implies that it was created in the planarization process. The removal of $v$ from the planarization yields at least two connected components of it. This implies that each dummy vertex lies completely in a connected component. Therefore, $v$ is a cutvertex also in $H$; a contradiction. 

Regarding Property~\ref{prp:poles}, suppose that $d$ has exactly two neighbors in one of the connected components obtained by removing $\{d,t\}$.
Then, by flipping this component, the crossing corresponding to $d$ is eliminated in $H$. If removing $\{d,t\}$ yields more than two connected components, then we can reorder them around~$t$ to remove the crossing corresponding to~$d$.
Note that if $t$ is also such a dummy vertex, then the crossing corresponding to $t$ is also eliminated. Thus, there are exactly two connected components; since~$d$ has degree~4, one of them contains exactly three neighbors of $d$, and the other one contains exactly one neighbor of $d$.
In particular, all four neighbors of $d$ lie in these two components, so $(d,t)$ is not an edge of $G$. Applying this procedure to every such separation pair, yields a new embedding of $H$ which has the claimed property.
\end{proof}

\section{The triconnected case}
\label{sec:triconnected}

In this section, we show that \emph{any} set $S$ of $\Delta$ slopes is universal for the class of triconnected $1$-planar graphs of degree $\Delta$.
To this end, let $H$ be a triconnected $1$-plane graph and $G$ its planarization on $n$ vertices, which by \cref{lem:planarization} is triconnected.
Let~$\Pi = (P_0, \ldots, P_m)$~be a canonical order of $G$, such that $P_0=\{v_1,v_2\}$ and $P_m=\{v_n\}$ with $v_1$ and $v_2$ being real vertices, and $(v_1,v_2)$ and $(v_1,v_n)$ being edges on the outer face of~$G$. Afterwards, we describe how to adjust our approach if there is no edge between two real vertices.
Finally, let $\delta$ be the maximum run of any non-horizontal slope in~$S$, when the rise is~$1$.

We use a technique similar to the algorithm of~\cite{AngeliniBLM19}, which constructs $1$-bend drawings of planar graphs.
To this end, they recursively construct a drawing by installing the vertices following the canonical order above the already constructed drawing. To accommodate the new vertices, they make sure that every edge on the contour has a horizontal segment that can be used to stretch the drawing.
Note that this algorithm in combination with \cref{lem:planarization} yields
a $3$-bend drawing of~$H$. 

For a $2$-bend drawing, around each dummy vertex of $G$, we carefully assign ports so~that each of the two edges of $H$ involved in the crossing receives two bends in total. In our approach, a particular edge incident to the dummy vertex (i.e., its last one) is drawn bendless (see \cref{fig:placement-dummy}).
We first describe for each vertex of~$G$ the ports that its incident edges use in $\Gamma$; see \cref{fig:ports-real}. The edge $(v_1,v_2)$ is assigned to an arbitrary bottom port both at~$v_1$ and~$v_2$. Consider any real vertex $v_i$ of $G$.
Then, the sole incoming blue (green) edge incident to $v_i$, if any, uses its horizontal left (right) port if its other endpoint is real, and the first bottom port in counterclockwise (clockwise) direction if its other endpoint is dummy.
If $v_i$ belongs to a chain $P_j$, the black edge to its neighbor in $P_j$ that is towards $v_1$ ($v_2$) along $C_i$, if any, uses the horizontal left (right) port at $v_i$.
Note that after assigning the left and right ports to the edges of the chain, $v_i$ has at most $\Delta-2$ outgoing edges, and $\Delta-1$ available top ports; we will leverage this observation in the biconnected case in \cref{sec:biconnected}.
The $k$-th outgoing blue (green) edge encountered in a counterclockwise (clockwise) traversal of the edges around $v_i$ in the planar embedding of $G$ uses the $k$-th top port of $v_i$ in counterclockwise (clockwise) order around $v_i$ starting from the horizontal right (left) one.
The outgoing red edge uses the $(b+1)$-th top port of $v_i$ in counterclockwise order around $v_i$ starting from the horizontal right one, where $b$ equals the number of blue outgoing edges of $v_i$.
Finally, the $k$-th incoming red edge in a clockwise traversal of the edges around $v_i$ in the planar embedding of $G$ uses the $(k+1)$-th bottom port of $v_i$ in clockwise order around $v_i$ following the horizontal right one.

\begin{figure}
    \centering
    \begin{subfigure}{0.24\textwidth}
        \centering
        \includegraphics[page=1]{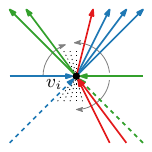}
        \subcaption{}
        \label{fig:ports-real}
    \end{subfigure}
    \hfill
    \begin{subfigure}{0.24\textwidth}
        \centering
        \includegraphics[page=2]{figures/ports.pdf}
        \subcaption{}
        \label{fig:ports-dummy-deg3}
    \end{subfigure}
    \hfill
    \begin{subfigure}{0.24\textwidth}
        \centering
        \includegraphics[page=3]{figures/ports.pdf}
        \subcaption{}
        \label{fig:ports-dummy-deg2a}
    \end{subfigure}
    \hfill
    \begin{subfigure}{0.24\textwidth}
        \centering
        \includegraphics[page=4]{figures/ports.pdf}
        \subcaption{}
        \label{fig:ports-dummy-deg2b}
    \end{subfigure}    
    \caption{Illustration of the port assignment around \textbf{(a)}~a real vertex and \textbf{(b--d)}~a dummy vertex. For a real vertex, we assign ports based on the colors and directions of the edges; for a dummy vertex, we assign ports based on the circular order of its neighbors, starting from the last one.}
    \label{fig:ports}
\end{figure}

We next proceed to assign ports around each dummy vertex. To this end, consider a dummy vertex $v_d$ with $d\neq n$; see \cref{fig:ports-dummy-deg3,fig:ports-dummy-deg2a,fig:ports-dummy-deg2b}.
Let $(v_i,v_j,v_{i'},v_{j'})$ be the neighbors of $v_d$ in counterclockwise order, such that $v_{j'}$ is the last neighbor of $v_d$.
Since $d\neq n$, the edge $(v_d,v_{j'})$ is an incoming edge at $v_{j'}$. Let $\pi$ be the bottom port that $(v_d,v_{j'})$ uses at $v_{j'}$. Then, $(v_j,v_d)$ uses $\pi$, and $(v_d,v_{j'})$ uses the port opposite of $\pi$ at $v_d$. The edge $(v_i,v_d)$ uses the left horizontal port and $(v_{i'},v_d)$ uses the right one at $v_d$.
Finally, consider the case that $v_n$ is a dummy vertex. Let $v_i,v_j,v_{i'},v_{j'}$ be the neighbors of $v_d$ in counterclockwise order such that $v_i=v_1$. In this case, there is no outgoing edge. We choose~$\pi$ as the port opposite of the port assigned to $(v_j,v_n)$ at~$v_j$ and assign the ports exactly as in the previous case. This completes the port assignment, which in the next lemma we show that is valid.

\begin{lemma}\label{lm:assignement}
No two edges share a port and the planar embedding of $G$ is preserved.   
\end{lemma}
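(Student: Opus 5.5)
We check the two claims separately for real and for dummy vertices. Throughout we use the standard structure of the $4$-coloring induced by a canonical order. At a real vertex $v_i$, in counterclockwise order starting from the contour edge towards $v_2$, we see: the outgoing green edges, the outgoing red edge (if any), and the outgoing blue edges, all in the upper half-plane. Further on come the incoming blue edge (or the left chain edge), the incoming red edges, and the incoming green edge (or the right chain edge).

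\textbf{Real vertices, no sharing.} For a real vertex the port conflicts are counting statements. Each real vertex has degree at most $\Delta$, and $S$ has $\Delta$ slopes, so there are $2\Delta$ ports, of which $\Delta-1$ are top ports strictly above the horizontal ones.
\begin{itemize}
\item Since every vertex except $v_n$ has an outgoing edge and (for $i\ge 2$) at least one incoming edge, the number $o$ of outgoing edges satisfies $o\le \Delta-1$.
\item The blue outgoing edges take the first $b$ top ports counterclockwise from the right horizontal port. The red edge takes the $(b+1)$-th. The $g$ green edges take the first $g$ top ports clockwise from the left horizontal port.
\item These sets are disjoint because $b+1+g\le o\le \Delta-1$.
\item The incoming red edges use bottom ports from the $2$nd onwards clockwise after the right horizontal port.
\item The incoming blue (green) edge uses the left (right) horizontal port if its other endpoint is real, and otherwise the first bottom port counterclockwise (clockwise). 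These are the extreme bottom ports.
\end{itemize}
One must show the number of incoming red edges plus the blue/green edges fits into the $\Delta-1$ bottom ports. This again holds because there is at least one outgoing edge. A finer check is needed where an incoming blue edge from a dummy vertex and the incoming red edges both use bottom ports: the red edges occupy the ports starting from the second clockwise, while the blue one uses the first counterclockwise, i.e.\ the last clockwise. They collide only if the total exceeds $\Delta-1$. The vertices $v_1,v_2$ (bottom port for $(v_1,v_2)$) and $v_n$ need separate small arguments.

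\textbf{Real vertices, embedding preserved.} Because the ports are assigned in exactly the cyclic order in which the color classes appear around $v_i$, the rotation system at $v_i$ is preserved. Blue ports run counterclockwise from the right, green ports run clockwise from the left, and red sits in between.

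\textbf{Dummy vertices.} A dummy vertex $v_d$ has exactly four edges. The four ports used are $\pi$'s opposite (for the last edge), $\pi$ itself (for $(v_j,v_d)$), and the two horizontal ports. These are distinct provided $\pi$ is non-horizontal. This holds because the last edge of $v_d$ enters $v_{j'}$ as an incoming red edge, or as a blue/green edge from a dummy vertex, and by the assignment above such edges use bottom (non-horizontal) ports. For the embedding, the counterclockwise order $v_i,v_j,v_{i'},v_{j'}$ must match left, bottom, right, top. This holds since the last edge goes upward (opposite of a bottom port at $v_{j'}$ is a top port at $v_d$) and $v_j$, opposite to it, gets the bottom port $\pi$. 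The case $d=n$ is symmetric, since $\pi$ is taken opposite a port at $v_j$, which is a top port.

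\textbf{Main obstacle.} The remaining and hardest step is consistency on both endpoints. An edge whose ports are fixed at one endpoint, namely the edge $(v_j,v_d)$ forced to use $\pi$ at $v_d$, must not clash with ports at $v_j$. We also need that the port at $v_d$ is compatible with the color of $(v_j,v_d)$, i.e.\ that $(v_j,v_d)$ is incoming at $v_d$ (it is the critical edge, opposite the last one). I expect this case analysis on the colors of the four edges at $v_d$, following \cref{fig:ports-dummy-deg3,fig:ports-dummy-deg2a,fig:ports-dummy-deg2b}, to be the main work. It splits according to whether $v_d$ is a singleton with one, two or three predecessors.
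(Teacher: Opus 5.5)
Your skeleton is the paper's: the rotation is preserved because each class of edges is laid out in its rotation order, real vertices are handled by a local count, and a dummy vertex uses its two horizontal ports plus two opposite non-horizontal ones. Your reason why $\pi$ is a bottom port (the last edge always enters a real $v_{j'}$ as a red edge, or as a blue/green edge from a dummy) is more explicit than the paper's. Two things, however, are wrong or missing.

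First, your rotation at a real vertex is mirrored in the upper half. Counterclockwise from the right contour edge one meets the outgoing \emph{blue} edges first, then the red one, then the green ones. This is because each new blue edge at $v_i$ becomes the new contour edge of $v_i$ towards $v_2$. With the order you state, putting the blue edges on the first top ports counterclockwise from the right would reverse the rotation. So your embedding argument only goes through after this correction.

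Second, $v_n$ is not a routine special case: it has no outgoing edge, so your count breaks. If $v_n$ is real of degree $\Delta$, its $\Delta-2$ incoming red edges fill the bottom ports $2,\ldots,\Delta-1$ clockwise from the right horizontal port. The blue edge $(v_1,v_n)$ would then collide with the last of them if $v_1$ were dummy. The missing ingredient is that $v_1$ was chosen real, so $(v_1,v_n)$ takes the left horizontal port of $v_n$. This is exactly why the paper insists on a real $v_1$ and adds an auxiliary edge when every edge of $H$ is crossed. The vertices $v_1,v_2$ are indeed easy: the bottom port of $(v_1,v_2)$ leaves $\Delta-1$ top ports for the at most $\Delta-1$ remaining edges, which are all outgoing.

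Finally, the step you single out as the main obstacle is not part of this lemma. Ports belong to vertices, so both claims are local. The port of $(v_j,v_d)$ at $v_j$ is fixed by the real-vertex rules and is already accounted for in the count at $v_j$. Its port at $v_d$ is fixed by the dummy rule, and nothing couples the two.

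Moreover, the property you hope to prove there is false in general. If $v_d$ lies on a chain, its two outgoing edges are consecutive in its rotation. Hence the edge opposite its last edge is one of its two edges in $G_{k+1}$. For an interior chain vertex this is an undirected black chain edge, not an incoming one. The chain construction handles this by shifting the neighbouring chain vertex by one unit vertically.

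Such direction questions matter for the drawing invariants, not for port disjointness or the rotation system. Once the two points above are fixed, your argument is complete.
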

\begin{proof}
Since the port assignment preserves the order of the edges around each vertex, the planar embedding of $G$ is preserved.

At every dummy vertex, we have assigned two edges to the horizontal ports, and two edges to opposite non-horizontal ports, so no port is used twice.

Consider a real vertex $v_i$ for $i\in \{3,\ldots,n-1\}$. It has exactly one outgoing red edge. Furthermore, it either has an incoming blue edge or 
it belongs to a chain $P_j$ and has a black edge to its neighbor in $P_j$ that is towards $v_1$ along $C_i$, which uses its left horizontal port; and it either has an incoming green edge or it belongs to a chain $P_j$ and has a black edge to its neighbor in $P_j$ that is towards $v_2$ along $C_i$, which uses its right horizontal port. Thus, there are at most $\Delta-3$ incoming red edges, and at most $\Delta-2$ outgoing edges in total. We have reserved $\Delta-3$ bottom ports for the incoming red edges and $\Delta-1$ top ports for the outgoing edges. Thus, there are enough ports for the edges to be assigned to, and the order of assignment ensures that no port is used twice.

    Vertex~$v_1$ has no incoming edge. The outgoing edge $(v_1,v_2)$ is assigned to a bottom port, so there are $\Delta-1$ top ports left for the other (at most) $\Delta-1$ outgoing edges of $v_1$. Similarly,~$v_2$ has only one incoming edge, $(v_1,v_2)$, which is assigned to a bottom port, so there are also enough top ports for its outgoing edges.

    Finally, vertex~$v_n$ has no outgoing edges. If~$v_n$ is real and has degree~$\Delta$, then we have to make sure that there are enough bottom ports for its incoming edges. Since~$v_1$ is a real vertex, the incoming blue edge $(v_1,v_n)$ uses the horizontal left port at~$v_n$. The incoming green edge uses either the horizontal right port or the first bottom port in clockwise order after the horizontal right port. Hence, there are $\Delta-2$ bottom ports remaining that the $\Delta-2$ incoming red edges are assigned to.
\end{proof}

\noindent For $0 \leq k < m$, let $\Gamma_k$ be a planar drawing of $G_k$ with the following invariant~properties:

\begin{enumerate}[label=I.\arabic*]
    \item\label{inv:cut}There is a cut through every contour edge,
except the last edge of each dummy vertex.
    \item\label{inv:ports} Each edge in $\Gamma_k$ uses its assigned port at its two endpoints.
    \item\label{inv:ymonotone} Each edge $(v_i,v_j)$ with $v_i\prec v_j$ and $j<n$ is drawn $y$-monotone from $v_i$~to~$v_j$.

    \item\label{inv:dummy} For each dummy vertex $v_d$ in $G_k$ due to the crossing of $(v_i,v_{i'})$ and $(v_j,v_{j'})$ such that $v_{j'}$ is its last neighbor, the following hold.
        \begin{enumerate}[label=\alph*)]
            \item If $v_i\in G_k$, then $(v_d,v_i)$ is drawn with at most one bend in $\Gamma_k$;
            \item If $v_{i'}\in G_k$, then $(v_d,v_{i'})$ is drawn with at most one bend in $\Gamma_k$;
            \item If $v_{j'}\in G_k$, then $(v_d,v_{j'})$ is drawn without bends in $\Gamma_k$.
        \end{enumerate}
\end{enumerate}

In the base of our recursive algorithm, $k=0$ holds and drawing $\Gamma_0$ of $G_0$ is easy to be derived. We place $v_1$ at $(0,0)$ and $v_2$ at $(2\delta+1,0)$.
Let $s_1$ and $s_2$ denote the slopes of the (bottom) ports reserved for $(v_1,v_2)$ at $v_1$ and $v_2$, respectively.
The edge $(v_1,v_2)$ is drawn with three segments: one segment of slope $s_1$ incident to $v_1$, followed by a horizontal segment at $y$-coordinate $-1$, and a segment of slope $s_2$ incident to $v_2$. Since all slopes have run between $-\delta$ and $\delta$, the resulting drawing~$\Gamma_1$ of~$G_1$ is planar and all invariants are maintained.

Next, we describe how to obtain a drawing $\Gamma_{k+1}$ of $G_{k+1}$ that satisfies \ref{inv:cut}--\ref{inv:dummy} from a corresponding drawing $\Gamma_k$ of $G_k$.
We first introduce some notation.
Suppose that $P_{k+1}=\{v_g,\ldots,v_h\}$.
Let $L$ be a horizontal line at least $h-g+1$ units above~$\Gamma_k$.
Let $w_1, \ldots, w_p$ be the neighbors of the vertices in $P_{k+1}$ on $C_k$ as encountered in a traversal of $C_k$ from $v_1$ to $v_2$, so that $w_1$ and $w_p$ are the first and last such neighbors.
If $P_{k+1}$ is a singleton, then $g=h$.
If $P_{k+1}$ is a chain, then $p=2$ holds, and the only edges between the vertices of $P_{k+1}$ and $G_i$ are $(w_1,v_g)$ and $(w_p,v_h)$.
For $1\leq i\leq p$, let $s_i$ be the slope of the port reserved for $(w_i,v_g)$
(or $(w_p,v_h)$ if $i=p$) at $w_i$ if $w_i$ is real, or at $v_g$ (resp.\ $v_h$)
otherwise, and let $s_i'$ be the slope of the port reserved for $(w_i,v_g)$
(resp.\ $(w_p,v_h)$) at $v_g$ (resp.\ $v_h$). Let $q_i$ be the intersection of $L$
with the ray at $w_i$ with slope $s_i$; see~\cref{fig:placement-deg2real-before}.
We next present a few auxiliary lemmas.

    \begin{figure}
        \centering
        \subcaptionbox{\label{fig:attachable}}{\includegraphics[page=1]{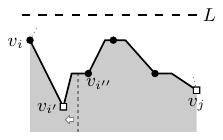}}
        \hfill
        \subcaptionbox{\label{fig:stretching}}{\includegraphics[page=1]{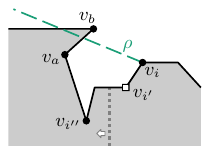}\quad \includegraphics[page=2]{figures/stretching.pdf}}
        \hfill
        \subcaptionbox{\label{fig:placement-deg2real-before}}{\includegraphics[page=16]{figures/placement.pdf}}
        \caption{Illustrations for the proofs of \textbf{(a)} \cref{lem:attachable}, \textbf{(b)} \cref{lem:rays}, and \textbf{(c)} \cref{lem:stretch}.}
        \label{fig:stretching-lemmas}
    \end{figure}

\begin{lemma}\label{lem:attachable}
    Let~$v_i$ and~$v_j$ be two vertices on the contour~$C_k$ of~$\Gamma_k$, such that each of them has a neighbor in $G \setminus G_k$. Then there is a cut through an edge between $v_i$ and $v_j$ on $C_k$.
\end{lemma}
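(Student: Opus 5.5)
By Invariant~\ref{inv:cut}, every contour edge of $C_k$ admits a cut, except the last edge of a dummy vertex. The plan is therefore to show that the subpath of $C_k$ between $v_i$ and $v_j$ contains at least one edge that is not the last edge of any dummy vertex. Assume w.l.o.g.\ that $v_i$ precedes $v_j$ on $C_k$, and let $Q$ be the subpath of $C_k$ from $v_i$ to $v_j$. Since $v_i\neq v_j$, the path $Q$ has at least one edge. Suppose, for a contradiction, that every edge of $Q$ is the last edge of some dummy vertex.

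The first step is to see what this forces at the endpoints. Consider the first edge $e$ of $Q$, which is incident to $v_i$. If $e$ is the last edge of a dummy endpoint $v_d$ of $e$, its other endpoint is the last neighbor of $v_d$. Both endpoints of $e$ lie in $G_k$, so the last neighbor of $v_d$ lies in $G_k$. By property~(iv) of the canonical order, every vertex has a neighbor later in the order, so every neighbor of $v_d$ lies in $G_k$. This settles the case $e=(v_d,v_i)$ with $v_d\in\{v_i\}$: then $v_i=v_d$ is a dummy vertex with no neighbor in $G\setminus G_k$, contradicting the hypothesis. Hence the dummy vertex whose last edge is $e$ must be the other endpoint of $e$, and $v_i$ is its last neighbor. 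The symmetric argument applies to the last edge of $Q$ at $v_j$.

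The second step propagates this along $Q$. Let $Q=(u_0=v_i,u_1,\dots,u_r=v_j)$. The edge $(u_0,u_1)$ is the last edge of $u_1$, so $u_1$ is dummy and all its neighbors are already in $G_k$. Then $(u_1,u_2)$ cannot be the last edge of $u_1$, since $u_1$ has only one last edge and it points to $u_0$. So $(u_1,u_2)$ must be the last edge of $u_2$, which makes $u_2$ dummy with last neighbor $u_1$. Inducting along $Q$, each $u_\ell$ with $\ell\ge 1$ is dummy, and $(u_{\ell-1},u_\ell)$ is the last edge of $u_\ell$. In particular $v_j=u_r$ is a dummy vertex all of whose neighbors lie in $G_k$ (its last neighbor $u_{r-1}$ is in $G_k$). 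This contradicts the assumption that $v_j$ has a neighbor in $G\setminus G_k$. Hence some edge of $Q$ is not a last edge of a dummy vertex, and Invariant~\ref{inv:cut} provides a cut through it.

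The main obstacle is a degenerate case: an edge joining two dummy vertices might be the last edge of both. This does not break the propagation, because at each step we only use that $u_\ell$'s last edge is already fixed as $(u_{\ell-1},u_\ell)$. A second point needs checking: a cut obtained from Invariant~\ref{inv:cut} ends on another outer-face edge, and it must still count as a cut in the required sense. This holds by definition, so no further argument is needed there.
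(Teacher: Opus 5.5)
Your proof is correct and is essentially the paper's argument: the first edge of $Q$ can only be the last edge of its other endpoint $u_1$, so the next contour edge is not $u_1$'s last edge. The paper stops there, noting that $u_1$'s other contour neighbor is real (no two dummy vertices are adjacent in the planarization), whereas your induction along $Q$ does not use that fact. One small correction: all neighbors of $v_d$ lie in $G_k$ because its last neighbor has maximum index in $\Pi$, not because of property (iv) of the canonical order.
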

\begin{proof}
    Let $v_{i'}$ be the neighbor of $v_i$ on the path from $v_i$ to $v_j$ along the contour $C_k$; see \cref{fig:attachable}. By \ref{inv:cut}, there is a cut through the edge $(v_i,v_{i'})$, unless it is the last edge of a dummy vertex. In the exceptional case, since each of $v_i$ and $v_j$ has a neighbor in $G \setminus G_k$, $v_{i'}$ must be the dummy vertex and $v_{i'}\neq v_g$. Then the other neighbor $v_{i''}$ of $v_{i'}$ in $C_k$ must be real, which by \ref{inv:cut} implies that there is a cut through the edge $(v_{i'}, v_{i''})$.
\end{proof}

\begin{lemma}\label{lem:rays}
    Let $v_i$ be a vertex on $C_k$ of $\Gamma_k$ having a neighbor in $G \setminus G_k$, and $s$ be the slope of a top free port of $v_i$ incident to the outer face of $\Gamma_k$.~Then a drawing $\Gamma'_k$ of~$G_k$ satisfying \ref{inv:cut}--\ref{inv:dummy} can be computed so that the ray~$\rho$ starting at $v_i$ with slope $s$ does not intersect $\Gamma'_k$.
\end{lemma}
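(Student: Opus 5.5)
The plan is to stretch $\Gamma_k$ horizontally along suitable cuts so that the ray $\rho$ escapes the drawing. Since the port at $v_i$ with slope $s$ is a free top port incident to the outer face, $\rho$ starts in the outer face and goes strictly upward; if $s$ is horizontal, it is a horizontal ray to the left or right. I will treat the case where $\rho$ points to the upper right (positive run, or horizontal right); the upper-left case is symmetric. The only parts of $\Gamma_k$ that $\rho$ can hit are those to the right of $v_i$ and above it. Because $\Gamma_k$ is bounded by the contour $C_k$ together with the edge $(v_1,v_2)$, and everything else lies below $C_k$, the first obstacle $\rho$ meets is a portion of the contour $C_k$ between $v_i$ and $v_2$.

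\textbf{Choosing the cut.} Along $C_k$ from $v_i$ towards $v_2$, I take the first edge $e$ after $v_i$ that admits a cut. By \ref{inv:cut}, every contour edge has a cut except last edges of dummy vertices. If the edge of $C_k$ incident to $v_i$ towards $v_2$ is such a last edge, then its other endpoint is a dummy vertex $v_d$ whose last neighbor is $v_i$. Hence the next contour edge at $v_d$ is not the last edge of $v_d$, and its other endpoint is real, since two dummy vertices are never adjacent in the planarization. That edge therefore has a cut. This is exactly the argument of \cref{lem:attachable}, and I will invoke it with $v_j=v_2$; note that $v_2$ always has a later neighbor for $k<m$. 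So there is a cut $\gamma$ through an edge between $v_i$ and $v_2$ that lies at most one dummy vertex away from $v_i$.

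\textbf{Stretching.} I stretch along $\gamma$, translating the part right of $\gamma$, which contains $v_2$ but not $v_i$, to the right by a large distance $D$. Stretching preserves planarity, slopes, ports, $y$-monotonicity, and the number of bends. It also keeps every cut a cut, so \ref{inv:cut}--\ref{inv:dummy} continue to hold in $\Gamma'_k$.

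\textbf{Main obstacle.} After stretching, the part of the drawing that $\rho$ could still hit is the piece left of $\gamma$ and right of $v_i$. Near $v_i$, this piece is only the contour edge(s) between $v_i$ and $e$, possibly passing through one dummy vertex $v_d$. I must show $\rho$ misses these. Since $\rho$ uses a free top port on the outer face, in clockwise order it lies between the ports of the contour edges at $v_i$. By the port assignment, the contour edge leaving $v_i$ to the right uses either the right horizontal port (going to an earlier real vertex via the green/black rule) or a bottom port. Its drawing is $y$-monotone by \ref{inv:ymonotone}, so it stays at or below the height of $v_i$ until it descends. If the edge leads to a dummy vertex $v_d$ and is bendless, with $v_d$ above $v_i$, then $\rho$ diverges from it angularly, since it lies on a different slope counterclockwise before it. After stretching, the remaining part of the contour lies far to the right, beyond the line through $v_i$ with slope $s$, at every height up to the top of $\Gamma_k$, provided $D$ exceeds $\delta$ times the height of $\Gamma_k$. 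Hence $\rho$ exits above $\Gamma'_k$.

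The delicate case analysis is the dummy-vertex situation: the edges at $v_d$ with at most one bend, and the bendless last edge. I expect this to be the main difficulty, and I would handle it by checking the few port configurations shown in \cref{fig:ports-dummy-deg3,fig:ports-dummy-deg2a,fig:ports-dummy-deg2b}.
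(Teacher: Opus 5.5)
Your strategy is the paper's: take a cut through the contour edge at $v_i$ on the side of the obstacle, or through the next contour edge if the first is the last edge of a dummy vertex, and stretch along it. The gap is in the one non-routine step, namely that $\rho$ misses the part of the contour between $v_i$ and that cut. You postpone this step (``I would handle it by checking the few port configurations''), and the partial argument you give is wrong in two places.

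First, the contour edge at $v_i$ towards $v_2$ need not use the right horizontal port or a bottom port. Suppose a real $v_i$ served as $w_1$ of an earlier inserted path and is still on $C_k$; this always happens for $v_1$, for instance. Then that contour edge is an outgoing blue edge of $v_i$. It leaves through a top port and rises above $v_i$, so your ``stays at or below the height of $v_i$'' does not apply. What you need there is the following. Since $\rho$ lies in the outer-face sector at $v_i$, it is counterclockwise from the first segment of this edge. The edge's horizontal segment, which carries the cut, starts at the end of that first segment and runs to the right. Hence $\rho$ passes to the left of it.

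Second, the configuration you single out, a bendless edge to $v_d$ with $v_d$ above $v_i$, cannot occur. The edge $(v_d,v_i)$ is the last edge of $v_d$, so $v_d\prec v_i$. By \ref{inv:ymonotone} and \ref{inv:dummy}(c), this edge is a single non-horizontal segment rising from $v_d$ to $v_i$. The observation you are missing is exactly the one the paper uses. The other contour edge $(v_d,v'')$ is adjacent to the last edge in the rotation at $v_d$. So by the port assignment, its unique horizontal segment is incident to $v_d$ and lies at height $y(v_d)<y(v_i)$. Consequently the contour between $v_i$ and the cut consists of the segment $v_dv_i$ and a piece of that horizontal segment. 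Apart from $v_i$ itself, this lies strictly below $v_i$ and cannot meet the upward ray $\rho$.

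There are also three smaller issues. The direction of $\rho$ does not tell you on which side of $v_i$ along $C_k$ the hit edge lies, so your claim that ``the first obstacle is between $v_i$ and $v_2$'' is unjustified. The paper instead splits on the side of the hit edge $(v_a,v_b)$; alternatively, you can stretch at cuts on both sides of $v_i$. Horizontal rays should be excluded, since top ports are the upward ones. For a horizontal $\rho$, stretching does not change heights, so your estimate on $D$ in terms of $\delta$ and the height of $\Gamma_k$ yields nothing. Finally, $v_2$ need not have a neighbor in $G\setminus G_k$: it has none once its outer-face neighbor other than $v_1$ has been placed, unless that neighbor is $v_n$. You do not need this anyway, because the argument of \cref{lem:attachable} only uses that the dummy vertex differs from $v_2$, and this holds since $v_2$ is real.
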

\begin{proof}
    If~$\rho$ intersects an edge of~$G_k$, then it also intersects some edge $(v_a,v_b)$ of~$C_k$; see \cref{fig:stretching}. Assume w.l.o.g.\ that $v_a$ and $v_b$ lie on a traversal from $v_1$ to $v_i$ on $C_k$; the other case is symmetric. Let $v_{i'}$ be the neighbor of $v_i$ on the path from $v_i$ to $v_1$ along~$C_k$. By \ref{inv:cut}, there is a cut through $(v_i,v_{i'})$ unless it is the last edge of a dummy vertex. In the latter case, since $v_i$ has a neighbor in $G \setminus G_k$, $v_{i'}$ must be the dummy vertex. By \ref{inv:ymonotone}, $v_{i'}$ must lie below $v_i$. Consider now the other neighbor $v_{i''}$ of $v_{i'}$. Since $(v_{i'},v_{i''})$ is not a last edge, it contains a horizontal segment; in particular, by the port assignment, this horizontal segment is incident to $v_{i'}$. So there is a cut through this segment that lies between $v_i$ and the intersection point of $\rho$ with the edge $(v_a,v_b)$. In both cases, we have found a cut that guarantees that $\Gamma_k$ can be stretched horizontally so that the horizontal distance between $v_i$ and both $v_a$ and $v_b$ increases sufficiently to ensure that $\rho$ does not intersect $(v_a,v_b)$.
\end{proof}

\noindent In view of \cref{lem:rays}, w.l.o.g.\ we  assume  that in $\Gamma_k$ any ray emanating from a vertex on $C_k$ with the slope of a free top port does not intersect $\Gamma_k$ if the vertex has a neighbor in $G\setminus G_k$.
Recall that $q_i$ is the intersection of $L$
with the ray at $w_i$ with slope $s_i$.

\begin{lemma}\label{lem:stretch}
    Given a positive integer $d$, we can compute a drawing $\Gamma'_k$ of $G_k$ that satisfies \ref{inv:cut}--\ref{inv:dummy} such that (i)~$L$ is above $\Gamma'_k$ and (ii)~$x(q_{i+1})\ge x(q_i)+d$ for $i\in\{1,\ldots,p-1\}$.
\end{lemma}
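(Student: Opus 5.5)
Part (i) is handled first and is immediate. $L$ is defined only relative to $\Gamma_k$ (at least $h-g+1$ units above it), and stretching changes no $y$-coordinate. So we fix $L$ first and then perform only horizontal stretchings. Every $\Gamma'_k$ we produce has the same $y$-extent as $\Gamma_k$, and $L$ stays above it.

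For (ii) we process the pairs $(q_i,q_{i+1})$ for $i=1,\ldots,p-1$ one at a time, from left to right. Each $w_i$ is a vertex of $C_k$ with a neighbor in $P_{k+1}\subseteq G\setminus G_k$. \cref{lem:attachable} applied to $w_i$ and $w_{i+1}$ therefore gives a contour edge strictly between them along $C_k$ with a cut through it. We stretch $\Gamma_k$ along this cut by a sufficiently large amount $D_i$. The cut is strictly $y$-monotone, meets only horizontal segments, and separates the drawing into a left part containing $w_1,\ldots,w_i$ and a right part containing $w_{i+1},\ldots,w_p$. Translating the right part by $D_i$ keeps the drawing planar and keeps every non-horizontal slope. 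It also preserves every port, every $y$-monotonicity and every bend count, so \ref{inv:ports}--\ref{inv:dummy} still hold. Every cut that existed before still exists afterwards: its horizontal segments only become longer and its curve can be shifted accordingly. Hence \ref{inv:cut} is maintained as well.

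To choose $D_i$, note that $q_j$ is determined by $w_j$ and the fixed slope $s_j$. Since $y(L)-y(w_j)$ does not change, translating $w_j$ by $D_i$ translates $q_j$ by exactly $D_i$. Before the stretch, $x(q_{i+1})-x(q_i)$ is some fixed number, at least $-(x(w_i)-x(w_{i+1}))-2\delta\,(y(L)-\min_j y(w_j))$, so choosing $D_i$ large enough makes the gap at least $d$. A stretch at step $i$ moves $q_{i+1},\ldots,q_p$ together, so the gaps fixed at earlier steps are untouched and later gaps can only grow. After $p-1$ steps all inequalities hold. We also reapply \cref{lem:rays} after the stretchings, as needed, to keep the standing assumption on free top-port rays; stretching in the direction that separates the obstruction only helps.

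The main obstacle is confirming that the cut from \cref{lem:attachable} really separates $w_i$ from $w_{i+1}$, with $w_{i+1}$ on the right side. A cut starting on the contour edge between them, running downward through horizontal segments and ending on another outer-face edge, must end on the part of the outer face not between $w_i$ and $w_{i+1}$. We get this by arguing that the cut ends on $(v_1,v_2)$ (or on the boundary below), as in the original technique of~\cite{AngeliniBLM19}. This uses that the contour from $v_1$ to $v_2$ together with the bottom edge $(v_1,v_2)$ bounds the drawing, so any such cut splits the contour into a left piece and a right piece.
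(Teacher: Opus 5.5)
Your proposal is correct and follows the paper's proof. For each consecutive pair $w_i,w_{i+1}$ you invoke \cref{lem:attachable} to obtain a cut between them and stretch there until $x(q_{i+1})\ge x(q_i)+d$; $L$ stays above because no $y$-coordinate changes. Your extra checks are details the paper leaves implicit: the left-to-right processing so that earlier gaps are untouched, the preservation of the invariants under stretching, and the fact that the cut reaches $(v_1,v_2)$ and therefore separates $w_1,\ldots,w_i$ from $w_{i+1},\ldots,w_p$.
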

\begin{proof}
    For each $i\in\{1,\ldots,p-1\}$, since both $w_i$ and $w_{i+1}$ have neighbors in
    $G \setminus G_k$, \cref{lem:attachable} yields a cut through an edge between $w_i$ and $w_{i+1}$ on $C_k$; see \cref{fig:placement-deg2real-before}.~Each~such cut guarantees that~$\Gamma_k$ can be stretched
    horizontally between $w_i$ and $w_{i+1}$ so as to ensure $x(q_{i+1})\ge x(q_i)+d$.
    Applying these stretches for all $i\in\{1,\ldots,p-1\}$, we obtain the
    desired spacing. Since these operations do not change any $y$-coordinates, $L$
    remains above the drawing.
\end{proof}

\noindent We next describe how to introduce $P_{k+1}$ in $\Gamma_k$ to derive $\Gamma_{k+1}$ depending on the type of~$P_{k+1}$.
We will maintain \ref{inv:ports}--\ref{inv:dummy} by construction. 
The next lemma ensures that \ref{inv:cut} is maintained for most cases, so we will only show how to maintain \ref{inv:cut} when its conditions are not met. In particular, we only create a horizontal segment below $L$ in the case that we draw an edge from a dummy vertex to a real vertex that is not its last neighbor.

\begin{lemma}\label{lem:cut-horizontal}
For every edge on $C_{k+1}$ that has a horizontal segment on or above $L$, \ref{inv:cut} holds.
\end{lemma}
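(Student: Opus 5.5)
To prove \cref{lem:cut-horizontal}, we build the required cut explicitly. Let $e$ be an edge on $C_{k+1}$ with a horizontal segment $h$ lying on or above $L$. Every vertex of $P_{k+1}$ and every segment drawn in this step above $L$ lies above all of $\Gamma_k$, since $L$ is above $\Gamma_k$. Such an edge is new, i.e., incident to a vertex of $P_{k+1}$. We start the cut at a point of $h$ and go upward vertically. Above $h$ the only structure of $\Gamma_{k+1}$ is the part of the newly added edges lying over $h$, and since $e$ is on the outer face we want to reach the outer face with a strictly $y$-monotone curve that meets only horizontal segments.

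The key steps, in order, are the following. \textbf{(1)} Argue that $h$ is on the outer face of $\Gamma_{k+1}$ from above. By \ref{inv:ymonotone} and the construction, edges of $P_{k+1}$ leave its vertices through bottom ports or horizontal ports, while top ports remain free for future edges. So no part of $\Gamma_{k+1}$ lies strictly above a contour edge's horizontal segment on $L$ or above it, except possibly other horizontal segments of chain edges at the same or higher level. \textbf{(2)} Go downward from $h$. Let $h$ be the top endpoint region. We need the cut to end on a horizontal segment of another outer-face edge $e'\neq e$. We take the existing cut of $\Gamma_k$ that passes below, guaranteed by \ref{inv:cut} for $\Gamma_k$ together with \cref{lem:attachable}. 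Concretely, between the two attachment points $w_i,w_{i+1}$ on $C_k$ there is a cut $\gamma$ through some edge of $C_k$ between them, and this edge is no longer on $C_{k+1}$. Extend $\gamma$ upward, past its former starting segment (which it crosses, and which is horizontal), into the new face bounded by $P_{k+1}$ and $C_k$. Continue it $y$-monotonically to $h$, crossing only horizontal segments. This is possible because the new face lies between the $y$-monotone edges $(w_i,\cdot)$ and $(w_{i+1},\cdot)$, and its lower boundary is reached from below. The result is a curve from $h$ down to the end of $\gamma$ on an outer-face edge of $\Gamma_k$ that is still on the outer face, namely the edge $(v_1,v_2)$ or another edge of $C_{k+1}$ below. \textbf{(3)} Verify strict $y$-monotonicity and that only horizontal segments are crossed. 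Inside the new face the curve can be routed as a nearly vertical segment with a tiny slope, avoiding all vertices, since we may place it strictly between the non-horizontal segments bounding the face. Those segments span the face vertically, so no crossing with them is needed.

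The main obstacle is step (2): guaranteeing that the new face under $h$ is entered from below through a horizontal segment on which an old cut starts. The new face is delimited below by a subpath of $C_k$ between $w_i$ and $w_{i+1}$. If all such contour edges were last edges of dummy vertices, \ref{inv:cut} would give no cut. I would handle this exactly as in the proof of \cref{lem:attachable}: at most one such consecutive edge is exceptional, and the next contour edge carries a horizontal segment incident to the dummy vertex. The cut can be taken through it, and it lies below the face, so it can be joined to $h$.
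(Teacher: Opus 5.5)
Your proposal is essentially the paper's proof: take the cut that \cref{lem:attachable} provides through a $C_k$-edge between $w_1$ and $w_2$ (or between $w_{p-1}$ and $w_p$, i.e., the pair below the new face under the horizontal segment of $e$), and extend it upward through that face to the horizontal segment of $e$, which lies above $\Gamma_k$. Your step (1) is not needed, and its claim that only horizontal chain segments can lie above such a segment is too strong. For example, in the degree-$2$ dummy singleton case the slope-$s_{1'}$ segment into $v_g$ and the horizontal segment of $(w_p,v_g)$ can lie above part of the horizontal segment of $(w_1,v_g)$ on $L$. This does not affect the argument, because the cut only needs to start at one exposed point of $h$.
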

\begin{proof}
Let $e$ be such an edge. W.l.o.g.\ we assume that $e \in G_{k+1}\setminus G_{k}$. 
By \cref{lem:attachable}, there is a cut through an edge between $w_1$ and
    $w_2$ on $C_k$. Symmetrically, there is also one between $w_{p-1}$ and $w_p$. Since the horizontal segment of $e$ lies on or above $L$ (i.e., above  $\Gamma_k$), one of these
    cuts can always be extended vertically to pass through the horizontal segment of
    $e$. 
\end{proof}

\begin{figure}[t]
    \centering
    \includegraphics[page=1]{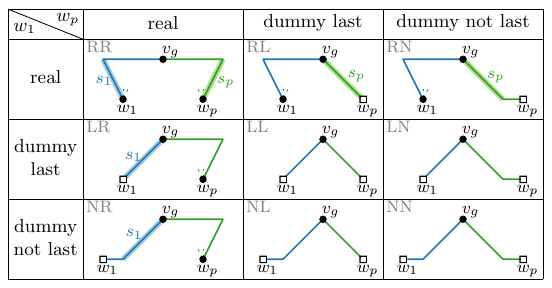}
    \caption{Introducing a degree-$2$ real singleton vertex $v_g$ in $\Gamma_k$ based on its neighbors in $C_k$.}
    \label{fig:deg-2-singleton}
\end{figure}

\medskip\noindent\textbf{$\mathbf{P_{k+1}}$ is a degree-2 singleton in $\mathbf{G_{k+1}}$ such that $\mathbf{v_g}$ is real:}
To conform with the port assignment held at the beginning of our algorithm, we seek to draw each of the edges $(w_1,v_g)$ and $(w_p,v_g)$ as follows; see \cref{fig:deg-2-singleton}.
If $w_1$ is real, then the segment of edge $(w_1,v_g)$ incident to $w_1$ will be of slope $s_1$ while the one incident to $v_g$ will be horizontal; see \cref{fig:deg-2-singleton}$\color{defblue}_{\text{RR}}$.
If~$w_1$ is dummy and $(w_1,v_g)$ is the last edge of $w_1$, then the edge $(w_1,v_g)$ will be drawn with a single segment of slope $s_1$; see \cref{fig:deg-2-singleton}$\color{defblue}_{\text{LR}}$.
Otherwise, if~$w_1$ is dummy and $(w_1,v_g)$ is not the last edge of $w_1$, then the segment of edge $(w_1,v_g)$ incident to $w_1$ will be horizontal while the one incident to $v_g$ will be of slope $s_1$; see \cref{fig:deg-2-singleton}$\color{defblue}_{\text{NR}}$.
Symmetrically, we will draw edge~$(w_p,v_g)$.

We next claim that the top port of $w_1$ (symmetrically of~$w_p$) with slope $s_1$ ($s_p$) is a free port on the outer face of $\Gamma_k$.
If $w_1$ is dummy and $(w_1,v_g)$ is not the last edge of $w_1$, all top ports of $w_1$ are free in $\Gamma_k$. Otherwise, the claim follows from the port assignment.

If $w_1$ is dummy and $(w_1,v_g)$ is not its last edge, before introducing $v_g$ we first have to stretch $\Gamma_k$ in order to place the horizontal segment of the edge $(w_1,v_g)$. To this end,
consider the neighbor $w_1'$ of $w_1$ on the path between $w_1$ and $v_2$ on $C_k$; see \cref{fig:dunno}. Since $w_1$ is dummy, $w_1'$ is real and there is a cut through $(w_1,w_1')$ by \ref{inv:cut}. We stretch $\Gamma_k$ such that everything to the left of this cut (including $w_1$) is moved to the left by one unit. Consider the ray $\rho_1$ exactly at the position it was before the stretching, i.e., one unit to the right of $w_1$ after we stretched. Since we stretched at an edge incident to $w_1$, no part of the drawing can be intersected by $\rho_1$ after the stretching. Thus, we can draw $(w_1,v_g)$ with a horizontal segment of length~1 at $w_1$ followed by a segment of slope $s_1$ on $\rho_1$. The case that $w_p$ is dummy and $(w_p,v_g)$ is not its last edge~is~symmetric.

\begin{figure}
    \centering
    \subcaptionbox{\label{fig:dunno}}{\includegraphics[page=14]{figures/placement.pdf}}

    \subcaptionbox{\label{fig:placement-deg2real-after}}{\includegraphics[page=2]{figures/placement.pdf}}
     \hfil
    \subcaptionbox{\label{fig:placement-deg2dummy}}{\includegraphics[page=3]{figures/placement.pdf}}
    \caption{Introduction of $v_g$ in $\Gamma_k$ when \textbf{(a)}~$w_1$ is dummy, and $v_g$ is real but not the last neighbor of $w_1$; \textbf{(b)} $w_1$ and $v_g$ are real; and \textbf{(c)} $w_1$ is real and $v_g$ is dummy.}
\end{figure}

If $q_p$ is to the left of $q_1$, then we apply \cref{lem:stretch} to guarantee that $q_p$ appears by at least one unit to the right of $q_1$.
To obtain $\Gamma_{k+1}$, we introduce $v_g$ at $\Gamma_k$ as follows (\cref{fig:placement-deg2real-after}).
If~$w_1$~is dummy and $w_p$ is real, we place $v_g$ at $q_1$;
if $w_1$ is real and $w_p$ is dummy, we place $v_g$ at~$q_p$;
if $w_1$ and $w_p$ are both real, we place $v_g$ along $L$ between $q_1$ and $q_p$;
if $w_1$ and $w_p$ are both dummy, we place $v_g$ at the intersection $q$ of $\rho_1$ and $\rho_p$. In the latter case, slopes $s_1$ and $s_p$ correspond to the ones of the first and the last bottom port at~$v_g$ in counterclockwise direction, respectively. 
As $q_1$ lies to the left of $q_p$, the rays $\rho_1$ and $\rho_p$ cross, and~$q$ lies above~$L$.

When $w_1$ is real, \ref{inv:cut} is preserved for $(w_1,v_g)$ by \cref{lem:cut-horizontal}, since it has a horizontal
segment on $L$. If $w_1$ is dummy and $(w_1,v_g)$ is not its last edge, consider
the (real) neighbor $w_1'$ of $w_1$ on the path from $w_1$ to $w_p$ on $C_k$. By
\ref{inv:cut}, there is a cut through $(w_1',w_1)$. By \ref{inv:ymonotone}, the horizontal segment
of this edge lies below $w_1$, so we can extend the cut through the horizontal
segment of $(w_1,v_g)$. The argument for $(w_p,v_g)$ is symmetric.

\medskip\noindent\textbf{$\mathbf{P_{k+1}}$ is a singleton such that $\mathbf{v_g}$ is dummy:}
We first discuss the case that $v_g$ has degree~2 in $G_{k+1}$. Afterwards, we prove the cases where $v_g$ has degree~3 or~4; see \cref{fig:deg-3-dummy} and \cref{fig:last-vertex}. Note that the latter case can only happen at $P_{k+1}=P_m=\{v_n\}$.

\noindent \textbf{$\mathbf{v_g}$ has degree~2 in $G_{k+1}$:} Let $(w_1,w_1')$ and $(w_p,w_p')$ be the edges of $H$ that induced $v_g$ in $G$. It follows that $w_1$, $w_p$, $w_1'$ and $w_p'$ are real. Assume w.l.o.g.\ that $(v_g,w_1')$ is the last edge of $v_g$; the case that $(v_g,w_p')$ is the last is symmetric.
By the port assignment, ports $s_1$ and $s_p$ are free and on the outer face of $\Gamma_k$ (see \cref{lm:assignement}).
We draw the edge $(w_1, v_g)$ with three segments: a segment of slope $s_1$, followed by a horizontal segment, and then a segment of slope $s_{1'}$. We also draw the edge $(w_p, v_g)$ with a segment of slope $s_p$ followed by a horizontal segment; see \cref{fig:deg-2-dummy}.
We ensure that $q_p$ appears by at least $2\delta+2$ units to the right of $q_1$ by applying \cref{lem:stretch}.
\begin{figure}
    \centering
    \subcaptionbox{\nolinenumbers \label{fig:deg-2-dummy}}{\includegraphics[page=3]{figures/drawing-styles.pdf}}
    \hfil
    \subcaptionbox{\nolinenumbers \label{fig:deg-3-dummy}}{\includegraphics[page=5]{figures/drawing-styles.pdf}}
    \caption{Illustration of the case that $v_g$ is a dummy singleton of \textbf{(a)} degree-2 and \textbf{(b)} degree-3.}
    \label{fig:placement-dummy}
\end{figure}
We place $v_g$ one unit above $L$ at $x$-coordinate $(x(q_1)+x(q_p))/2$; see \cref{fig:placement-deg2dummy}. This ensures that we can place the horizontal segment of $(w_1,v_g)$ on $L$ and connect it to $v_g$ with a segment of slope $s_{1'}$. This last segment has horizontal length at most~$\delta$, so it connects to the horizontal segment on a point strictly between $q_1$ and $q_p$ on~$L$. We draw $(w_p,v_g)$ with a segment of slope $s_p$ starting at $w_p$ and ending at a point $\hat q_p$ with $y(\hat q_p)=y(v_g)$. Since the horizontal distance between $q_p$ and $\hat q_p$ is at most~$\delta$, $\hat q_p$ lies strictly to the right of $v_g$, allowing us to complete the edge with a horizontal segment. Thus, the edges $(w_1,v_g)$ and $(w_p,v_g)$ are drawn as desired and do not cross each other.
Observe that the first segments of these two edges are crossing-free as described, and the other segments lie above $\Gamma_k$, so $\Gamma_{k+1}$ is planar. 

\noindent \textbf{$\mathbf{v_g}$ has degree~3 in $\mathbf{G_{k+1}}$:}
Let $(w_1,w_p)$ and $(w_2,w_2')$ be the edges of $H$ that induce $v_g$, where $w_2$ lies on $C_k$. Let $s_2$ and $s_2'$ be the slopes reserved for $(w_2,v_g)$ and $(w_2',v_g)$ at $w_2$ and $w_2'$, respectively. We draw $(w_1,v_g)$ and $(w_p,v_g)$ with two segments (a segment of slope $s_1$ or $s_p$, then horizontal), and $(w_2,v_g)$ with three segments (slope $s_2$, horizontal, slope $s_2'$); see \cref{fig:deg-3-dummy}. By construction, these ports are free and lie on the outer face of $\Gamma_k$.
Let $q_2$ be the intersection of $L$ and the ray from $w_2$ with slope $s_2$; see the left part of \cref{fig:strecth-deg3}. By \cref{lem:stretch}, we ensure that $x(q_2)-x(q_1)$ and $x(q_p)-x(q_2)$ are at least $2\delta+1$; see the right part of \cref{fig:strecth-deg3}.
We place $v_g$ one unit above $L$ at $x$-coordinate $x(q_2)$ (or $x(q_2)+1$ if $s_2'$ is vertical). Hence, we can route the horizontal segment of $(w_2, v_g)$ along $L$ and connect it to $v_g$ with slope $s_2'$. The final segment of this edge has horizontal length at most~$\delta$, ensuring that it meets the horizontal segment at a point strictly between $q_1$ and $q_p$ on~$L$. Next, we draw the edges $(w_p,v_g)$ and $(w_1,v_g)$ as in the degree-2 dummy case.
Thus, the edges $(w_1,v_g)$, $(w_2,v_g)$, and $(w_p,v_g)$ are drawn as desired and do not cross each other.
As before, the initial segments of the three new edges are crossing-free, and the remaining segments (including the horizontal) lie above $\Gamma_k$. Hence, $\Gamma_{k+1}$ is planar.

\noindent \textbf{$\mathbf{v_g}$ has degree~4 in $\mathbf{G_{k+1}}$:}
By the properties of a canonical order, this can only happen at $P_{k+1}=P_m=\{v_n\}$, so we do not have to maintain \ref{inv:ymonotone}, we have $w_1=v_1$.
Recall that we have assigned the horizontal slope to $(w_1,v_n)$ and $(w_3,v_n)$ at $v_n$, and the slope $s_2$ to $(w_p,v_n)$ and $(w_2,v_n)$ at $v_n$; see \cref{fig:last-vertex}.
We place $v_n$ at the intersection of $L$ with the ray starting in $w_2$ with slope $s_2$. We draw the edge $(w_2,v_n)$ with one segment of slope $s_2$. We draw the edge $(w_p,v_n)$ with 3 segments: the first starts in $w_p$ with slope $s_p$; the second is horizontal one unit above $L$; and the third ends in $v_n$ with slope $s_2$. The edges $(w_1,v_n)$ and $(w_3,v_n)$ are both drawn with two segments: one with slope $s_1$ and one with $s_3$ at $w_1$ and $w_3$, respectively, followed by a horizontal segment at $v_n$.
All segments, except the first segment of the new edges, lie above $\Gamma_{m-1}$, and they do not cross each other, so $\Gamma=\Gamma_{m}$ is planar.

\begin{figure}[t]
    \centering
    \subcaptionbox{\label{fig:chain-dummy1}}{\includegraphics[page=8]{figures/placement.pdf}}
    \hfil
    \subcaptionbox{\label{fig:chain-dummy2}}{\includegraphics[page=7]{figures/placement.pdf}}
    \hfil
    \subcaptionbox{\label{fig:chain-real}}{\includegraphics[page=6]{figures/placement.pdf}}
    \caption{Introducing a chain in $\Gamma_k$: \textbf{(a)}~$v_i$ is dummy, $(v_i, v_{i+1})$ is a critical edge,
    \textbf{(b)}~$v_i$ is dummy, $(v_i, v_{i+1})$ is a non-critical edge, and
    \textbf{(c)}~$v_i$ and $v_{i+1}$ are real.}
    \label{fig:placement-chain}
\end{figure}

\medskip\noindent\textbf{$\mathbf{P_{k+1}}$ is a chain in $\mathbf{G_{k+1}}$:}
Suppose that $P_{k+1}=\{v_g,\ldots,v_h\}$ is a chain in $G_{k+1}$. 
Since $G$ is $1$-planar, it follows that no two consecutive vertices of $P_{k+1}$ are dummy vertices. In other words, this implies that every dummy vertex of $P_{k+1}$ is preceded and followed by real vertices in $P_{k+1} \cup \{w_1,w_p\}$.

In view of \cref{lem:stretch}, we may assume without loss of generality that $q_p$ appears by at least $(3\delta+1) \cdot (h-g)+2$ units to the right of $q_1$. 
To obtain drawing $\Gamma_{k+1}$ of $G_{k+1}$ satisfying  \ref{inv:cut}--\ref{inv:dummy}, we sequentially introduce in $\Gamma_k$ vertices $v_g,\ldots,v_h$ of $P_{k+1}$ in this order.
Initially, we introduce $v_g$ in $\Gamma_k$ as it were a degree-2 singleton (real or dummy, as appropriate), placing it so that its $x$-coordinate coincides with that of $q_1$ if $w_1$ is dummy and the edge $(w_1,v_g)$ is the last edge of $w_1$ or with $q_1+1$, otherwise.
Assume that for some $i$ in $\{g,\ldots,h-1\}$ we have processed the vertices $v_g,\ldots,v_i$ such that the horizontal distance between any two consecutive vertices is exactly $2\delta+1$, while the corresponding vertical distance is at most~$1$. Let $v_{i+1}$ be the next vertex of $P_{k+1}$ to process. 

We first consider the case that $i+1<h$.
We set the $x$-coordinate of $v_{i+1}$ to $x(v_{i})+2\delta+1$. The $y$-coordinate of $v_{i+1}$ depends on the edge $(v_{i},v_{i+1})$ and also on the types of vertices $v_i$ and $v_{i+1}$.

Assume that $v_i$ is a dummy vertex. If $(v_i,v_{i+1})$ is a critical edge (see \cref{fig:chain-dummy1}), then we set the $y$-coordinate of $v_{i+1}$ to $y(v_i)-1$. We further draw the edge $(v_i,v_{i+1})$ with a horizontal segment of length at least $\delta+1$ incident to $v_{i+1}$ followed by a segment of slope 
$s_i$ incident to $v_i$, where $s_i$ is the slope of the edge $(v_i,v_{i+1})$ reserved at $v_i$. Otherwise (see \cref{fig:chain-dummy2}), the edge $(v_i,v_{i+1})$ is not a critical edge. In this case, we set the $y$-coordinate of $v_{i+1}$ to $y(v_i)$ and we draw the edge $(v_i,v_{i+1})$ as a horizontal edge. 

Assume now that $v_i$ is a real vertex. In this case, $v_{i+1}$ can either be a real vertex or a dummy vertex. In the first scenario (see \cref{fig:chain-real}), we set the $y$-coordinate of $v_{i+1}$ to $y(v_i)$ and we draw the edge $(v_i,v_{i+1})$ as a horizontal segment. Otherwise, $v_{i+1}$ is a dummy vertex. 
In this case, if $(v_i,v_{i+1})$ is a critical edge (see the edge $(v_{i-1}, v_i)$ in \cref{fig:chain-dummy2}), then we set the $y$-coordinate of $v_{i+1}$ to $y(v_i)+1$. We further draw the edge $(v_i,v_{i+1})$ with a horizontal segment of length at least $\delta+1$ incident to $v_i$ followed by a segment of slope $s_{i+1}$ incident to $v_{i+1}$, where $s_{i+1}$ is the slope of the edge $(v_i,v_{i+1})$ reserved at $v_{i+1}$. On the other hand (see the edge $(v_{i-1}, v_i)$ in  \cref{fig:chain-dummy1}), if the edge $(v_i,v_{i+1})$ is not critical, we set the $y$-coordinate of $v_{i+1}$ to $y(v_i)$ and we draw the edge $(v_i,v_{i+1})$ as a horizontal edge.

Consider now the vertex $v_h$. The $y$-coordinate of $v_h$ is computed in the same way as for all other vertices on the chain. We then first draw the edge $(v_h,w_p)$ exactly as in the real or dummy singleton case, depending on whether $v_h$ is real or dummy. By definition of $q_p$, if $v_h$ is placed on $L$, then its $x$-coordinate is either $x(q_p)$ or $x(q_p)-1$. Since $|y(q_p)-y(L)|\le h-g$, we have that 
\begin{align*}
x(v_h)&\ge x(q_p)-s_p(h-g)-1\\
&\ge x(q_p)-\delta(h-g)-1\\
&\ge x(q_1)+(3\delta+1)(h-g)+2-\delta(h-g)-1\\
&= x(q_1)+(2\delta+1)(h-g)+1\\
&\ge x(v_g)+(2\delta+1)(h-g)\\
&= x(v_{h-1})-(2\delta+1)(h-g-1)+(2\delta+1)(h-g)\\
&= x(v_{h-1})+2\delta+1.
\end{align*}
Thus, we can draw the edge $(v_{h-1},v_h)$ exactly as the other edges on the chain, extending its horizontal segment as needed.

The construction above gives us the desired drawings of the edges and maintains \ref{inv:ports},~\ref{inv:ymonotone} and~\ref{inv:dummy}. 
Since the edges $(w_1,v_g)$ and $(w_p,v_h)$ are drawn as in the degree-2 singleton cases and all edges of the chain lie above $\Gamma_k$, $\Gamma_{k+1}$ is planar. 
By \cref{lem:attachable}, there is a cut between $w_1$ and $w_p$ on $C_k$, and this cut can be extended to every edge on the chain, thereby maintaining~\ref{inv:cut}.

\medskip

\begin{figure}[t]
    \centering
    \subcaptionbox{\nolinenumbers \label{fig:strecth-deg3}}{
    \includegraphics[page=15]{figures/placement.pdf}}
    \hfill
    \subcaptionbox{\nolinenumbers \label{fig:last-vertex}}{\includegraphics[page=13]{figures/placement.pdf}}
    \caption{\textbf{(a)} Introducing a degree-3 dummy singleton in $\Gamma_k$, and \textbf{(b)} placement of dummy $v_n$.}
\end{figure}

\medskip\noindent\textbf{$\mathbf{P_{k+1}}$ is a singleton of degree $\mathbf{p\ge 3}$ in $\mathbf{G_{k+1}}$ such that $\mathbf{v_g}$ is real:}
\label{ssec:real-highdeg}
To cope with this case, we consider three~subcases.

\smallskip\noindent \textit{Case 1: $w_1$ and $w_p$ are dummy.}
Recall that $(v_1,v_n)$ is an edge and $v_1$ is real, so this case does not occur for $g=n$.
Assume first that $v_g$ is the last neighbor of $w_1$ and $w_p$. By \cref{lem:stretch}, we ensure that $q_1,\ldots,q_p$ appear on $L$ in this left-to-right order with pairwise horizontal distance at least~2.
We fix the first segment of each edge $(w_i,v_g)$, namely the one from $w_i$ to $q_i$.
For each $i$, let $\rho_i'$ be the ray from $q_i$ with slope $s_i'$, if $w_i$ is dummy; otherwise, $\rho_i'$ is the ray from $(x(q_i)+1,y(q_i))$ with slope $s_i'$; see \cref{fig:beautiful-before}. 
By the way we have assigned the ports at $v_g$, every pair of rays $\rho'_i$ and $\rho'_j$ cross.
Further, this choice of $\rho_i'$ guarantees that each edge $(w_i,v_g)$ will be drawn with a single segment when $w_i$ is dummy, and otherwise with three segments, one of which (the middle one) is horizontal along $L$.
For $1 \leq i \leq p-1$, let $y_i$ be the $y$-coordinate of the intersection of $\rho_i'$ and $\rho_{i+1}'$.
Stretching between $w_i$ and $w_{i+1}$ translates $q_1,\ldots,q_i$, while leaving $q_{i+1},\ldots,q_p$ fixed.
Hence, such a stretch changes only $y_i$; in particular, it increases $y_i$ continuously. We next apply two sweeps (in this order).

\begin{figure}[t]
    \centering
    \subcaptionbox{\label{fig:beautiful-before}}{\includegraphics[page=9]{figures/placement.pdf}}
    \hfil
    \subcaptionbox{\nolinenumbers \label{fig:beautiful-firstsweep}}{\includegraphics[page=11]{figures/placement.pdf}}
    \hfil
    \subcaptionbox{\nolinenumbers \label{fig:beautiful-secondsweep}}{\includegraphics[page=12]{figures/placement.pdf}}
    \caption{Introducing a singleton of degree at least $3$ in $\Gamma_k$: (a)~initially, (b)~after the right-to-left sweep, and (c)~after the left-to-right sweep.}
    \label{fig:beautiful}
\end{figure}

\begin{itemize}
\item Right-to-left sweep:
For $i=p-2,p-3,\ldots,1$,
if $y_i< y_{i+1}$, apply \cref{lem:stretch} to $w_i$ and $w_{i+1}$ until $y_i=y_{i+1}$.
After the sweep, 
$y_1\ge y_2\ge \ldots \ge y_{p-1}$ holds; see \cref{fig:beautiful-firstsweep}.
\item Left-to-right sweep:
For $i=1,2,\ldots,p-2$,
if $y_{i+1}< y_i$, apply \cref{lem:stretch} to $w_{i+1}$ and $w_{i+2}$ until $y_{i+1}=y_i$.
After the sweep, $y_1=y_2=\ldots=y_{p-1}$ holds; see \cref{fig:beautiful-secondsweep}.
\end{itemize}
All rays $\rho_1',\ldots,\rho_p'$ meet at a single point.
Indeed, let $x_i$ be the intersection of $\rho_i'$ and $\rho_{i+1}'$.
Since all $y_i$ are equal, all $x_i$ have the same $y$-coordinate.
Also, $x_i$ and $x_{i+1}$ both lie on ray $\rho_{i+1}'$.
As a ray intersects any horizontal line in at most one point, we obtain $x_i=x_{i+1}$ for each $i$.
So $x_1=\ldots=x_{p-1}$, i.e., rays $\rho_1',\ldots,\rho_p'$ are concurrent and
$v_g$ is placed there.

If $v_g$ is not the last neighbor of $w_1$, then we apply \cref{lem:stretch} once more between $w_1$ and $w_2$ to move $w_1$ by one unit such that we can add the horizontal segment of $(w_1,v_g)$; see \cref{fig:dunno}. We handle the case that $v_g$ is not the last neighbor of $w_p$ symmetrically.

\smallskip\noindent\textit{Case 2: $w_1$ or $w_p$ are real.}
We apply the above construction to the remaining neighbors of $v_g$.
Then, we connect $v_g$ to each omitted real neighbor in $C_k$ with two segments; a first of slope $s_1$ or $s_p$, followed by a horizontal incident to $v_g$.
By a suitable stretch near the corresponding side of the contour, each horizontal segment can be placed above the current drawing, avoiding crossings. The invariant~\ref{inv:cut} is preserved for $(w_1,v_g)$ and $(w_p,v_g)$ by \cref{lem:cut-horizontal}.

\medskip

Recall that we assumed $(v_1,v_2)$ to be an edge between real vertices. 
However, there exist triconnected 1-plane graphs in which every edge is crossed. If~$H$ is such a graph, then there is no edge between two real vertices in its planarization~$G$. Thus, no such edge can be selected as $(v_1,v_2)$ in the canonical order. In this case, we proceed as follows.

\begin{figure}[b]
     \centering
    \subcaptionbox{\label{fig:face}}{\includegraphics[page=1]{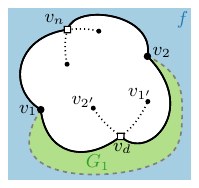}}
    \hfil
    \subcaptionbox{\label{fig:gamma1}}{\nolinenumbers \includegraphics[page=6]{figures/drawing-styles.pdf}}
    \caption{The case that all edges of $H$ are crossed. \textbf{(a)} Selecting face~$f$ and adding the dummy edge $(v_1,v_2)$; \textbf{(b)} Drawing of $\Gamma_1$.}
    \label{fig:all-crossed}
\end{figure} 

We choose an arbitrary face~$f$ as the outer face. By 1-planarity, there is no edge between two dummy vertices, and by assumption, there is no edge between two real vertices. Hence, the boundary of~$f$ contains at least four vertices; two real and two dummy. We choose $v_n,v_1,v_d,v_2$ as four consecutive vertices on the boundary of~$f$ such that $v_n$ and $v_d$ are the dummy vertices, and $v_1$ and $v_2$ are the real ones; see \cref{fig:face}.

We add an extra edge $(v_1,v_2)$ to~$G$. Observe that this may increase the degree of~$v_1$ and~$v_2$ to~$\Delta+1$, requiring a slight adaptation of our algorithm. In the canonical order, we have $P_1=\{v_d\}$, so $G_1$ consists of the triangle $(v_1,v_2,v_d)$. We assign the horizontal right port to $(v_1,v_d)$ at $v_1$ and the horizontal left port to $(v_2,v_d)$ at $v_2$. All other ports are assigned as described above.

We draw $G_1$ similar to the degree-2 dummy singleton case. Without loss of generality,  assume that $v_{1'}$ is the last neighbor of $v_d$; the other case is symmetric. 
We place $v_1$ at $(0,0)$, $v_d$ at $(2\delta,1)$ and $v_2$ at $(3\delta+1,1)$; see \cref{fig:gamma1}. Let $s_d$ be the slope of the port reserved for $(v_1,v_d)$ at $v_d$, and let $s_1$ and $s_2$ denote the slopes of the (bottom) ports reserved for $(v_1,v_2)$ at $v_1$ and $v_2$, respectively.
Edge $(v_1,v_d)$ is drawn with two segments: one horizontal at~$v_1$, and one with slope~$s_d$ at~$v_d$. Edge $(v_1,v_2)$ is drawn as a single horizontal segment. Finally, edge $(v_1,v_2)$ is drawn with three segments: one segment of slope $s_1$ incident to $v_1$, followed by a horizontal segment at y-coordinate $-1$, and a segment of slope $s_2$ incident to $v_2$. Since all slopes have rise between $-\delta$ and $\delta$, the resulting drawing~$\Gamma_1$ of~$G_1$ is planar and satisfies all invariants. See \cref{fig:drawing} for an example drawing created by our algorithm.

\begin{figure}[t]
    \centering
    \includegraphics[page=6]{figures/schnyder.pdf}
    \caption{Illustration of a drawing created by the algorithm in \cref{thm:triconnectedcase}.}
    \label{fig:drawing}
\end{figure}

\begin{theorem}\label{thm:triconnectedcase}
For any $\Delta \geq 4$, every set $S$ of $\Delta$ slopes is universal for $2$-bend $1$-planar drawings of triconnected $1$-planar graphs with maximum degree $\Delta$. Also, for any such graph on $n$ vertices, a $2$-bend $1$-planar drawing on $S$ can be computed in $O(n)$ time if a $1$-planar embedding is specified as part of the input.
\end{theorem}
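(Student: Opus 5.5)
The plan is to assemble the incremental construction of this section into a complete argument with three parts: correctness (a planar $2$-bend drawing of $G$ on $S$ that respects the port assignment), translation back to a $2$-bend $1$-planar drawing of $H$, and the running-time bound.

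\textbf{Drawing $G$.} First, I would apply \cref{lem:planarization} to re-embed $H$ so that its planarization $G$ is triconnected. Then I compute a canonical order $\Pi$ with real $v_1,v_2$ on a real--real outer edge $(v_1,v_2)$, or use the dummy-edge construction of \cref{fig:all-crossed} if no such edge exists. With this order I fix the port assignment, which is valid by \cref{lm:assignement}. Here $\Delta\ge 4$ is needed because dummy vertices have degree $4$: they use two horizontal ports plus one further slope, so in particular $|S|\ge 2$ is required.

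I then argue by induction on $k$ that $\Gamma_k$ exists, is planar, and satisfies \ref{inv:cut}--\ref{inv:dummy}. The base case is $\Gamma_0$ (or $\Gamma_1$ in the all-crossed case). The step goes through the case analysis above: degree-$2$ real singletons, dummy singletons of degree $2$, $3$, $4$, chains, and real singletons of degree $\ge 3$. It relies on \cref{lem:attachable,lem:rays,lem:stretch,lem:cut-horizontal} to supply the stretches and to restore \ref{inv:cut}. I would also check that in the all-crossed case the extra edge $(v_1,v_2)$ does not break the port count: $v_1$ and $v_2$ get a horizontal port for $(v_1,v_d)$ and $(v_2,v_d)$, so only bottom ports are consumed. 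After the deletion of $(v_1,v_2)$ at the end, degrees are back to at most $\Delta$.

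\textbf{Bend count in $H$.} This is the step I expect to be the main obstacle, because the invariants count bends only per half-edge. For an uncrossed edge of $H$, I would check from each placement rule that at most three segments are used, i.e., at most two bends. For a crossed edge $(v_i,v_{i'})$ whose crossing is $v_d$, the two halves $(v_i,v_d)$ and $(v_d,v_{i'})$ each have at most one bend by \ref{inv:dummy}(a),(b). Since they use opposite horizontal ports at $v_d$, they continue straight through $v_d$ without creating a bend there, giving at most two bends in total.

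For the other edge $(v_j,v_{j'})$, the half $(v_d,v_{j'})$ is bendless by \ref{inv:dummy}(c). The half $(v_j,v_d)$ has at most two bends, and the port choice $\pi$ at $v_d$ is opposite to the one used at $v_{j'}$, so the two halves join collinearly at $v_d$. Hence this edge also has at most two bends. The final vertex $v_n$ needs separate treatment when it is dummy: there I verify directly, from the drawing of \cref{fig:last-vertex}, that $(w_2,v_n)$ together with $(v_n,w_p)$ totals two bends, and that $(w_1,v_n)$ together with $(v_n,w_3)$ totals two bends.

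Replacing each dummy vertex by a crossing point then yields a drawing of $H$ in which each edge is crossed at most once, since the planarization is respected. Every segment has a slope in $S$ by \ref{inv:ports} and the fact that all other segments are horizontal. Because $S$ was an arbitrary set of $\Delta$ slopes (rotated so that it contains the horizontal slope), this establishes universality.

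\textbf{Running time.} The planarization, the re-embedding of \cref{lem:planarization}, the canonical order, the coloring, and the port assignment all take linear time. The stretches cannot be performed explicitly, as that could cost quadratic time. Instead, I would follow \cite{AngeliniBLM19,KindermannMSS21} and store $x$-offsets relative to contour neighbours, so that each stretch is an $O(1)$ offset update. In particular, the two sweeps for high-degree real singletons solve for offsets in $O(p)$ time, which sums to $O(n)$ over all steps. Absolute coordinates are then recovered by a final traversal, giving $O(n)$ time overall.
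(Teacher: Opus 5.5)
Your proposal is correct and follows the paper's route: build the planar $2$-bend drawing of the planarization with the canonical-order construction and its invariants. Then remove each dummy vertex, and use \ref{inv:dummy} together with the opposite-port assignment at the dummy to show each crossed edge of $H$ gets at most two bends; linear time follows from the linear-time planarization and canonical order. You are more careful than the paper's short proof in three places: checking the dummy $v_n$ case directly, since the invariants are only stated for $k<m$; verifying the port count in the all-crossed case; and noting that stretches need a relative-offset implementation to stay linear. These are elaborations of the same argument, not a different one.
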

\begin{proof}
    The algorithm described above creates a $2$-bend planar drawing $\Gamma$ of a planarization~$G$ of a 1-planar drawing of the input 1-planar graph $H$ on $S$ that satisfies \ref{inv:ports} and \ref{inv:dummy}. 
    After removing the dummy vertices from $\Gamma$ and merging the incident opposite edges, we obtain a $1$-planar drawing $\Gamma'$ of~$H$. All uncrossed edges have at most two bends in~$\Gamma'$. Consider a crossed edge $(v_i,v_i')$. Let $v_d$ be the dummy vertex in $G$ that corresponds to the crossing of this edge. By \ref{inv:dummy}, $(v_i,v_d)$ and $(v_{i'},v_d)$ are drawn with at most two bends in total in $\Gamma$, and by \ref{inv:ports} and the port assignment, they use opposite ports at $v_d$. Thus, $(v_i,v_{i'})$ is drawn with at most two bends in~$\Gamma'$.
    Since the planarization of a $1$-planar embedding and the canonical order of a triconnected planar graph can be computed in $O(n)$ time, our algorithm runs in $O(n)$~time~as~well.
\end{proof}

\section{The biconnected case}
\label{sec:biconnected}

In this section, we focus on the biconnected case. Let $H$ be a biconnected $1$-plane graph and $G$ a planarization of it satisfying Properties~\ref{prp:tricon} and \ref{prp:poles} of \cref{lem:planarization}.

Let~$\mathcal{T}$ be the SPQR-tree~\cite{BattistaETT99} of~$G$. 
Assume that there exists a separation pair~$\langle s,t\rangle$ where $s$ and $t$ are real. We root$~\mathcal{T}$ in a node $\hat\mu$ with poles $s$ and $t$.
We will show how to handle the case that no such separation pair exists at the end of the section.

Intuitively, our algorithm works as follows.
We draw the nodes of the SPQR-tree bottom-up as stretchable rectangular ``chips'' with pins on their boundary that can be connected to the poles via edges with at most one extra bend; see \cref{fig:poles-addition}. At S- and P-nodes, we carefully align the chips of the child nodes, and at R-nodes we reuse the algorithm of \cref{sec:triconnected}.

Since the poles of a $P$-node are either joined by an edge or separated into at least three components, Property \ref{prp:poles} of \cref{lem:planarization} implies that no pole of a $P$-node is a dummy vertex.

Let~$\mu$ be a node of~$\mathcal{T}$ with poles~$s_\mu$ and~$t_\mu$.
If neither~$s_\mu$ nor~$t_\mu$ is a dummy vertex, we denote by~$G_\mu$ the pertinent graph of~$\mu$ (which is the subgraph of $G$ induced by the subtree of $\mathcal{T}$ rooted at $\mu$).
If~$s_\mu$ ($t_\mu$) is a dummy vertex with three neighbors in the pertinent graph of~$\mu$, then the parent~$\pi_\mu$ of~$\mu$ is an~$S$-node with an additional~$Q$-node child~$\nu$ ($\xi$) with poles~$s_\nu$ and~$t_\nu$, where~$t_\nu=s_\mu$ ($s_\xi$ and~$t_\xi$, where~$s_\xi=t_\mu$). 
We call each of these special~$Q$-nodes \emph{satellite} \emph{associated} to~$\mu$; see \cref{fig:augmented-example}. Since we seek to process~$\mu$ together with its satellites,  we denote by~$G_\mu$ the pertinent graph of~$\mu$, which if~$s_\mu$ ($t_\mu$) is a dummy vertex with three neighbors in~$G_\mu$, then we augment it with the edge~$(s_\mu, s_\nu)$ (with the edge~$(t_\mu, t_\xi)$, respectively); with slight abuse of notation, we further assume that~$s_\nu$ ($t_\xi$) is a pole of~$\mu$. We refer to such a node~$\mu$ and to such a pertinent graph~$G_\mu$ as \emph{augmented}. 
Note that, by definition and by \cref{lem:planarization}, an augmented node can be neither $Q$ nor $P$.

\begin{figure}[t]
    \centering
    \subcaptionbox{\label{fig:augmented-example}}{\includegraphics[page=1]{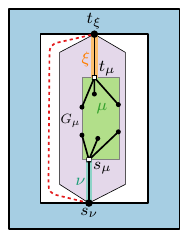}}
    \hfil
    \subcaptionbox{\label{fig:reduced-single}}{\includegraphics[page=3]{figures/reduced.pdf}}
    \hfil
    \subcaptionbox{\label{fig:reduced-degenerate}}{\includegraphics[page=2]{figures/reduced.pdf}}
    \caption{(a)~An augmented node~$\mu$ (green) with satellite $Q$-nodes~$\nu$ and~$\xi$ and augmented pertinent graph~$G_\mu$ (purple); (b)~a node~$\mu$ inside its parent $R$-node~$\pi_\mu$, reduced at~$s_\mu$, with reduced pertinent graph~$G_\mu$; (c)~a node reduced at both poles with~$u_\mu = v_\mu$. Red dashed: the virtual edge.}
    \label{fig:reduced}
\end{figure}

By Property \ref{prp:poles} of \cref{lem:planarization}, a dummy vertex that is a pole of a node of~$\mathcal{T}$ has either exactly three or exactly one neighbor in the pertinent graph of this node.
The former case is handled by the augmentation above; we now describe the symmetric treatment of the latter case.
So, let~$\mu$ be a node of~$\mathcal{T}$ whose pole~$s_\mu$ is a dummy vertex with exactly one neighbor~$u$ in the pertinent graph of~$\mu$.
Then~$s_\mu$ has degree~$2$ in the skeleton of~$\mu$, so~$\mu$ is an~$S$-node, and the first edge on the path from~$s_\mu$ to~$t_\mu$ in its skeleton is the real edge~$(s_\mu,u_\mu)$ (see \cref{fig:reduced-single}), which corresponds to a~$Q$-node child of~$\mu$.
We also call this~$Q$-node a \emph{satellite}, but associated to the parent~$\pi_\mu$ of~$\mu$, which will draw the edge~$(s_\mu,,u_\mu)$ itself.
Note that~$\pi_\mu$ is an~$R$-node:
no two~$S$-nodes are adjacent in~$\mathcal{T}$, a~$P$-node and its children share their poles, so a~$P$-node parent would have the dummy pole~$s_\mu$, contradicting Property \ref{prp:poles} of \cref{lem:planarization}, and the only~$Q$-node with a child is the root, whose poles are real.
We remove~$s_\mu$ and the edge~$(s_\mu,u_\mu)$ from the pertinent graph of~$\mu$ and assume, again with slight abuse of notation, that~$u_\mu$ is a pole of~$\mu$; if~$t_\mu$ is also a dummy vertex with exactly one neighbor~$v_\mu$ in the pertinent graph of~$\mu$, we proceed symmetrically at~$t_\mu$; see \cref{fig:reduced-degenerate} for the degenerate case~$u_\mu = v_\mu$.
We refer to such a node~$\mu$ and to such a pertinent graph~$G_\mu$ as \emph{reduced}.
Observe that the poles of a reduced node are real, and that~$\langle u_\mu,t_\mu\rangle$ is again a separation pair, unless the reduced pertinent graph consists of a single edge or, if~$\mu$ has been reduced at both poles with~$u_\mu=v_\mu$, of the single vertex~$u_\mu$; these degenerate cases are drawn directly by~$\pi_\mu$, as described in the~$R$-node case below.
 
Let~$\mu$ be a node of~$\mathcal{T}$ with poles~$s_\mu$ and~$t_\mu$ in a bottom-up traversal of $\mathcal{T}$. 
Let~$\overline{G}_\mu$ be the graph obtained from~$G_\mu$ by removing the edge~$(s_\mu,t_\mu)$, subdividing every edge that is incident to~$s_\mu$ and~$t_\mu$ once, and then removing the poles~$s_\mu$ and~$t_\mu$. 
We seek to construct a drawing of~$\overline{G}_\mu$ on~$S$, which is enclosed in an axis-parallel rectangle, such that all subdivision vertices are incident to horizontal segments and the ones previously incident to~$s_\mu$ ($t_\mu$) lie on the left (right) boundary of this rectangle. We call the rectangle containing the drawing \emph{the chip}~$C_\mu$ of~$\mu$ and the subdivision vertices on the left (right) side \emph{left} (\emph{right}) \emph{pins}. 
Observe that such a drawing is \emph{stretchable} in the sense that all edges incident to its pins can be elongated. Also, note that in contrast to the approach in~\cite{AngeliniBLM19}, we do not require the bottommost pins on the left and right side to be on the bottom left and bottom right corners of~$C_\mu$, since we can use one more slope in our approach. 
Furthermore, we will maintain as an invariant property that whenever a pin is incident to a vertex in~$\overline{G}_\mu$, then the edge connecting this pin to the pole can be drawn with one bend in a~$2$-bend drawing of the subgraph of~$H$ corresponding to~$\overline{G}_\mu$ that contains the stretchable drawing of~$\overline{G}_\mu$ as a subdrawing. In other words, one may spend one more bend to connect each pin to the pole to complete the stretchable drawing to a drawing of~$\overline{G}_\mu$. This property allows us to use the following auxiliary lemma (see \cref{fig:poles-addition}). 

\begin{lemma}[Angelini, Bekos, Liotta, Montechianni~\cite{AngeliniBLM19}]\label{lem:poles-addition}
Let $u \in \{s_\mu, t_\mu\}$ be a pole of a node $\mu$ of $\mathcal{T}$ and let $u_1, \ldots, u_q$ be the neighbors of $u$ in $\overline{G}_\mu$. Suppose that there exists a set of $\rho$ consecutive free rays of $u$
and a stretchable drawing of $\overline{G}_\mu$ such that the elongation of the edge incident to each
pin of $u$ intersects all these rays. Then, each of $(u, u_1), \ldots, (u, u_\rho)$ can be drawn with two straight-line segments whose slopes are in $S$, without introducing crossings.
\end{lemma}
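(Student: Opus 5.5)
The plan is a direct geometric construction. Each pin $p_i$ is joined to $u$ by a horizontal segment from $p_i$ followed by a segment on one of the free rays of $u$, so each edge $(u,u_i)$ has exactly two straight-line segments. W.l.o.g.\ let $u=s_\mu$, so the pins lie on the left side of the chip $C_\mu$, and order them $p_1,\ldots,p_q$ from top to bottom. Every pin carries a horizontal segment toward the left boundary, and by stretchability this segment can be extended leftward as a horizontal ray $h_i$. Also order the $\rho$ consecutive free rays $r_1,\ldots,r_\rho$ of $u$ so that, in the angular order around $u$, they respect the vertical order of the pins they will serve. Since the pins lie on a vertical line and their horizontal elongations point toward $u$, this means listing the rays from the one that meets the extensions highest to the one that meets them lowest. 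The hypothesis says every $h_i$ meets every $r_j$.

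The key steps, in order, are as follows.

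\emph{First}, assign pin $p_i$ to ray $r_i$ for $i=1,\ldots,\rho$. The hypothesis fixes the neighbors $u_1,\ldots,u_\rho$, and we pick the labeling of rays that matches the rotation system at $u$ required by the embedding. \emph{Second}, draw $(u,u_i)$ as the segment of $h_i$ from $p_i$ to the point $x_{ij}=h_i\cap r_i$, followed by the segment of $r_i$ from $x_{ij}$ to $u$. Both slopes are in $S$: the first is horizontal and the second is the slope of a port at $u$. \emph{Third}, check that no crossings arise. Two horizontal segments at distinct heights are disjoint. Two ray segments emanate from $u$ in distinct directions, so they meet only at $u$. A horizontal segment $h_i$ and a ray segment $r_j$ with $i\neq j$ must also be disjoint, and this is the only real check. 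Because the rays are consecutive and order-matched to the pins, $r_j$ for a pin above $p_i$ reaches its own horizontal line above $h_i$ and stays strictly on one side of $h_i$ between $u$ and that line. Hence it cannot cross the portion of $h_i$ lying between $x_{ii}$ and $p_i$; the symmetric argument handles pins below. Finally, all new segments lie outside $C_\mu$, in the region between the chip and $u$, so they avoid the drawing of $\overline{G}_\mu$. They also avoid the other edges at $u$, because the rays used are free and consecutive, so no occupied ray of $u$ lies angularly between them.

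The main obstacle is the monotonicity argument in the third step. It needs every intersection point $h_i\cap r_i$ to lie to the left of the chip, which follows from the hypothesis that the elongations reach the rays. It also needs the angular order of the rays to match the vertical order of the pins, including the case where $u$ lies above or below the chip so that some rays point up and some down. I would first split into these cases by the vertical position of $u$ relative to the pins. In each case I would argue, using the stretch from the hypothesis, that the intersection points $x_{ii}$ are ordered left-to-right consistently, and then apply a planar sweep.
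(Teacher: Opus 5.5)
Your construction is the standard argument for this lemma, which the paper imports from Angelini et al.\ without reproving it. You follow each pin's horizontal elongation to its assigned ray, then run along that ray to $u$, with rays assigned to pins in the rotation order forced at $u$. That part is correct. What does not hold up is your justification of the one nontrivial point: that the ray piece of one edge misses the horizontal piece of another.

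The claim that $r_j$ ``stays strictly on one side of $h_i$'' is false. If $u$ lies below the pins, the ray serving a pin above $p_i$ must cross the line $y=y(p_i)$; the only question is where. Your ordering rule (``from the one that meets the extensions highest'') also does not define an order, because every ray meets every extension.

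The fact you need is this. All rays leave the common apex $u$ into the same open half-plane, so at every height their left-to-right order equals their angular order: the ray of angle $\theta$ sits at $x(u)+(y-y(u))\cot\theta$, which is monotone in $\theta$. For $u=s_\mu$, match the pins bottom-to-top with the rays counterclockwise. Equivalently, a pin farther from the horizontal line through $u$ gets a ray farther from the chip. Now take a pin $p_j$ farther from that line than $p_i$. The piece of $r_j$ crosses the line $y=y(p_i)$ strictly to the left of $x_{ii}$, so it misses the segment from $x_{ii}$ to $p_i$. The piece of $r_i$ ends at height $y(p_i)$, so it never reaches $h_j$. Consecutiveness of the rays plays no role here; it only keeps other edges of $u$ out of the fan.

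The repair you sketch in your last paragraph would fail, and it is not needed. The bend points $x_{ii}$ have no fixed left-to-right order. Take $u=(0,0)$, the lower pin at height $1$ on the ray of slope $1$, and the upper pin on the ray of slope $\sqrt{3}$. The upper bend lies at $x\approx 5.77$ if the upper pin is at height $10$, but at $x\approx 0.64$ if it is at height $1.1$. Both configurations are crossing-free. No stretching can change this, since these points depend only on $u$, the slopes, and the pin heights.

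Likewise, the mixed case with some rays pointing up and some down cannot occur. An elongation at height $y\neq y(u)$ meets only rays pointing into its own half-plane. This is why the paper applies the lemma separately to top and bottom rays, and joins a neighbor at the height of $u$ by a horizontal segment.
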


\begin{figure}
	\centering
	\includegraphics[page=6]{figures/biconnected.pdf} 
	\hfil  
	\includegraphics[page=7]{figures/biconnected.pdf} 
	\caption{Illustration of \cref{lem:poles-addition}.}
	\label{fig:poles-addition}
\end{figure}

If $\nu_1, \ldots, \nu_h$ are the children of $\mu$, we construct a stretchable drawing of $\overline{G}_\mu$  from the stretchable drawings of $\overline{G}_{\nu_1}, \ldots, \overline{G}_{\nu_h}$ that have
been recursively constructed such that for each $i = 1, \ldots, h$, the embedding of $G$ restricted to $\overline{G}_{\nu_i}$ is preserved.
We proceed by distinguishing four~cases.

\medskip\noindent\textbf{$\mu$ is a $Q$-node:}
If~$\mu$ is not the root of~$\mathcal{T}$, then~$\overline{G}_\mu$ is an empty graph; the edge~$(s_\mu, t_\mu)$ corresponding to~$\mu$ will be drawn 
by~$\mu$'s parent~$\xi$, if~$\xi$ is not a~$P$-node, or by the parent of~$\xi$ otherwise. 
If~$\mu$ is a satellite~$Q$-node, then the edge~$(s_\mu, t_\mu)$ will be drawn when its associated node of~$\mathcal{T}$ will be visited, that is, the augmented node or, for a satellite coming from a reduced node, the parent~$R$-node.
Finally, if~$\mu$ is the root, then $\mu$ has a single child~$\nu$. As~$\overline{G}_\mu = \overline{G}_{\nu}$, the stretchable drawing of~$\overline{G}_{\nu}$ serves as the stretchable drawing for~$\overline{G}_\mu$.
By \cref{lem:poles-addition},~$s_\mu$, $t_\mu$ and their incident edge, can be added to this drawing.

\begin{figure}
    \centering
    \subcaptionbox{\nolinenumbers \label{fig:p_node_2}}{\includegraphics[page=2]{figures/biconnected.pdf}}   
     \hfil  
    \subcaptionbox{\nolinenumbers \label{fig:p_node_1}}{\includegraphics[page=1]{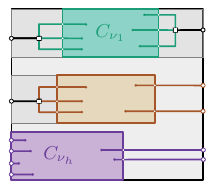}}
    \caption{Placement of the chips inside chip $C_\mu$ when $\mu$ is \textbf{(a)}~an augmented $S$-node or \textbf{(b)}~a $P$-node.}
    \label{fig:p_node}
\end{figure}

\medskip\noindent\textbf{$\mu$ is a $P$-node:}
Recall that $\mu$ has not been augmented and proceed as in \cite{AngeliniBLM19}.
We construct a new chip $C_\mu$ with a height exceeding the combined heights of $C_{\nu_1}, \ldots, C_{\nu_h}$ and a width greater than the maximum width among them; see \cref{fig:p_node_1}.
We then arrange the stretchable drawings of $\overline{G}_{\nu_1}, \ldots, \overline{G}_{\nu_h}$ such that the associated chips $C_{\nu_1}, \ldots, C_{\nu_h}$ are within $C_\mu$ and do not intersect one another, each chip has its left boundary aligned with the left boundary of $C_\mu$, and the lower boundary of $C_{\nu_h}$ coincides with the lower boundary of $C_\mu$. Subsequently, the segments incident to the pins located on the right boundaries of $C_{\nu_1}, \ldots, C_{\nu_h}$ are prolonged horizontally until they reach the right boundary of $C_\mu$.
To preserve the embedding of $G$, we may need to permute $C_{\nu_1}, \ldots, C_{\nu_h}$ so that the order in which the pins appear along the left and the right side of $C_\mu$ conforms with the embedding of $G$ restricted to $\overline{G}_\mu$. 
The resulting drawing is stretchable as each drawing associated with $C_{\nu_1}, \ldots, C_{\nu_h}$ is stretchable.

\medskip\noindent\textbf{$\mu$ is an $S$-node:} 
We first describe the case where $\mu$ has been augmented. Let $\mu$ be associated with two satellite $Q$-nodes (see \cref{fig:p_node_2}); the case of a single satellite $Q$-node is easier. Let $\sigma_\mu$ and $\tau_\mu$ be the (dummy) poles of $\mu$. We first construct a stretchable drawing as if $\mu$ were a standard node, in which the pins incident to the neighbors $\{u_1, u_2, u_3\}$ of $\sigma_\mu$ and to the neighbors $\{v_1, v_2, v_3\}$ of $\tau_\mu$ appear on the boundary of the chip according to the planar embedding. To obtain $C_\mu$, we place $\sigma_\mu$ and $\tau_\mu$ sufficiently to the left and to the right of the pins of $u_2$ and $v_2$, respectively, and we connect them to $u_2$ and $v_2$ by elongating these pins. The edges $(u_1, \sigma_\mu)$ and $(u_3, \sigma_\mu)$, as well as, $(v_1, \tau_\mu)$ and $(v_3, \tau_\mu)$ are then completed with one bend each using opposite ports at $\sigma_\mu$ and $\tau_\mu$. Finally, we extend two horizontal segments from $\sigma_\mu$ and $\tau_\mu$ to the chip boundaries to serve as the new pins for $s_\mu$ and $t_\mu$.

We now consider the case where $\mu$ has not been augmented.
If a pole of~$\mu$ is a dummy vertex, as $\mu$ has not been augmented, by Property \ref{prp:poles} of \cref{lem:planarization} it has exactly one neighbor in the pertinent graph of~$\mu$, so~$\mu$ has been reduced at this pole, the corresponding satellite~$Q$-node is drawn by the parent of~$\mu$, and the pole of the reduced node is real.
Hence, in the following,~$s_\mu$ and~$t_\mu$ denote the (real) poles of the possibly reduced node~$\mu$.
Let~$P$ be the path of virtual edges between~$s_\mu$ and~$t_\mu$ obtained by removing from the skeleton of~$\mu$ the virtual edge between its original poles as well as, if~$\mu$ has been reduced, the satellite~$Q$-nodes at its ends. Contracting every virtual edge corresponding to a satellite~$Q$-node associated to a child of~$\mu$, yields a path~$P_c = (u_0,\ldots,u_h)$ with~$u_0=s_\mu$ and~$u_h=t_\mu$. Each edge~$(u_i,u_{i+1})$ of~$P_c$ corresponds to a node~$\nu_i$ of~$\mathcal{T}$, augmented or not, for which a stretchable drawing of~$\overline{G}_{\nu_i}$ has already been computed.
Also the poles $u_i$ and~$u_{i+1}$ of~$\nu_i$ are real vertices.  
We construct a stretchable drawing of~$\overline{G}_\mu$  
similarly to~\cite{AngeliniBLM19}. Unlike~\cite{AngeliniBLM19}, however, a chip~$C_{\nu_i}$ may not have pins at both bottom corners, and some drawings may need to be flipped to preserve the~embedding.

Let $\rho_t$ and $\rho'_t$ be the first top rays of a vertex encountered when traversing clockwise and counterclockwise, respectively, starting from the horizontal ray (see \cref{fig:s_node_rays}).
Let~$\rho_b$ and~$\rho'_b$ be the corresponding bottom rays. 
We place~$u_1, \ldots, u_{h-1}$ in this order along a~horizontal line $L$ and process the virtual edges and their corresponding nodes from left to right. To preserve the embedding of~$G$, we may need to flip~$C_{\nu_i}$ vertically to ensure that the bottom-to-top order of its pins 
matches the clockwise and counterclockwise edge orderings around~$u_{i-1}$ and~$u_i$. For~$i = 2, \ldots, h-1$, chip~$C_{\nu_i}$ (\cref{fig:s_node_example}) is uniformly scaled down and positioned horizontally between~$u_{i-1}$ and~$u_i$ such that it does not intersect~$\rho'_t$ and~$\rho'_b$~emanating from~$u_{i-1}$ and~$\rho_t$ and~$\rho_b$ emanating from~$u_{i}$.
Additionally,~$C_{\nu_i}$ is aligned vertically such that, if $(u_{i-1}, u_i)$ is a real edge, then the bottom boundary of the chip is slightly above~$L$ such that $(u_{i-1},u_i)$ is drawn as a horizontal segment; otherwise, the~$y$-coordinate of its bottommost left pin coincides with~$u_{i-1}$. 

The boundary chips~$C_{\nu_1}$ and~$C_{\nu_h}$ are placed analogously. Chip~$C_{\nu_1}$ ($C_{\nu_h}$) is positioned to the left (right) of~$u_1$ ($u_{h-1}$) without intersecting the rays~$\rho_b$ and~$\rho_t$ ($\rho'_b$ and~$\rho'_t$) emanating from~$u_1$ ($u_{h-1}$). If~$(s_\mu, u_1)$ ($(u_{h-1}, t_\mu)$) is not a real~edge, its bottommost pin on the left side is aligned with line~$L$; otherwise, its bottom boundary is positioned slightly above~$L$. If~$\nu_1$ (resp.~$\nu_h$) is a~$Q$-node, then the edge~$(s_\mu, u_1)$ (resp.~$(u_{h-1}, t_\mu)$) belongs to~$G$, and the pin corresponding to this edge is the only pin of~$s_\mu$ (resp.~$t_\mu$).

\begin{figure}
    \centering
    \subcaptionbox{\nolinenumbers \label{fig:s_node_rays}}{\includegraphics[page=3]{figures/biconnected.pdf}}
     \hfil
    \subcaptionbox{\nolinenumbers \label{fig:s_node_example}}{\includegraphics[page=4]{figures/biconnected.pdf}}     
    \caption{Illustration of \textbf{(a)} rays $\rho_t$, $\rho'_t$, $\rho_b$, $\rho'_b$, \textbf{(b)} the placement of $C_{\nu_i}$ between $u_{i-1}$ and $u_i$.}
    \label{fig:s_node}
\end{figure}

We now draw all edges incident to each~$u_i$, $i = 1, \ldots, h-1$. For $i = 2, \ldots, h-1$, let~$d_i^t$ and~$d_i^b$ be the number of neighbors of~$u_i$ in~$\overline{G}_{\nu_i}$ above and below~$L$, respectively. We draw these edges using the first~$d_i^t$ top rays in clockwise order and the first~$d_i^b$ bottom rays in counterclockwise order, applying \cref{lem:poles-addition} twice: once for the bottom ports and once for the top ones. If a neighbor of~$u_i$ in $C_{v_i}$ is on~$L$, it is connected to~$u_i$ by a horizontal segment. Symmetrically, we draw the edges connecting $u_i$ to its~neighbors in $\overline{G}_{\nu_{i+1}}$.
By construction, the right horizontal port of~$u_i$ is always occupied, and at most~$\Delta-1$ top ports are used.
Finally, $C_\mu$ is the smallest rectangle that surrounds the current drawing, which is stretchable as the drawings of~$\overline{G}_{\nu_1}$ and~$\overline{G}_{\nu_h}$ are stretchable.

\medskip\noindent\textbf{$\mu$ is an $R$-node:}
We follow closely the approach developed in~\cite{AngeliniBLM19}.
If $\mu$ has been augmented, we apply a preprocessing. Recall that in this case, one or both of the poles of $\mu$ were dummy~vertices, and in the augmentation step, we attached to each of them the corresponding satellite~$Q$-node.
Assume first that both poles $s_\mu$ and $t_\mu$ were dummy vertices (\cref{fig:r_node_1}). 
Let $\sigma_\mu$ and $\tau_\mu$ be the poles of the augmented node $\mu$, so $(s_\mu,\sigma_\mu)$ and $(t_\mu,\tau_\mu)$ are edges of the skeleton of $\mu$. 
Since the skeleton of $\mu$ is not 3-connected, we augment $G_\mu$
by adding edges $(\sigma_\mu,\tau_\mu)$, $(\sigma_\mu, v)$ and $(\tau_\mu, v)$, where $v$ is adjacent to $s_\mu$ but not adjacent to $\sigma_\mu$ in~$H$. 

If only one of the poles of $\mu$, say $s_\mu$, were a dummy vertex and $t_\mu$ is a real vertex (see \cref{fig:r_node_2}), then let $\sigma_\mu$ be the pole of the augmented node $\mu$ adjacent to $s_\mu$, such that $(s_\mu, \sigma_\mu)$ is an edge in the skeleton of $\mu$.
Again, as $s_\mu$ and $t_\mu$ form a separation pair, we must augment $G_\mu$ to achieve triconnectivity.
First, we add the edge $(\sigma_\mu,t_\mu)$. 
Then, we add the edge $(\sigma_\mu, v)$, where $v$ is adjacent to $s_\mu$ but not to $\sigma_\mu$ in the original graph $H$.

This completes the preprocessing step for $\mu$ and ensures that the skeleton of $\mu$ is triconnected. Note that the auxiliary edges are used to compute the canonical ordering of $G_\mu$ required by the algorithm in \cref{sec:triconnected} such that $P_0 = \{\sigma_\mu, \tau_\mu\}$ (resp.\ $P_0 = \{\sigma_\mu, t_\mu\}$ if only~$s_\mu$ was a dummy vertex) and $P_m = \{v\}$.
Although these edges increase the degree of certain vertices, at most one additional slope outside of~$S$ is needed, and only temporarily during the construction, as we argue next.
Before the augmentation,~$\sigma_\mu$ and~$\tau_\mu$ have degree~$1$ in~$G_\mu$, so their degree is at most~$3$ afterwards.
Vertex~$v$, however, may have degree at most~$\Delta+2$.
Since~$P_m = \{v\}$, all edges incident to~$v$ are incoming and, by the port assignment of \cref{sec:triconnected}, use bottom or horizontal ports of~$v$, of which there are only~$\Delta+1$.
We therefore add one auxiliary slope not in~$S$, chosen angularly between the horizontal slope and its neighboring slope in~$S$, which provides one additional bottom port at~$v$.
The auxiliary edge~$(\sigma_\mu,v)$ uses the horizontal left port of~$v$, the auxiliary edge~$(\tau_\mu,v)$ uses the bottom port of the auxiliary slope, and the at most~$\Delta$ real edges incident to~$v$ use the remaining ports as in \cref{sec:triconnected}; if the degree of~$v$ in~$G_\mu$ is at most~$\Delta-1$, no auxiliary slope is needed.
Since every auxiliary edge is incident to~$\sigma_\mu$ or~$\tau_\mu$, removing the poles to define the chip~$C_\mu$ removes all auxiliary edges, and in particular every segment drawn with the auxiliary slope, without creating pins for them.
Thus, every segment of the chip~$C_\mu$ has a slope in~$S$, so the auxiliary slope exists only during the construction.
\begin{figure}
    \centering
    \subcaptionbox{\nolinenumbers \label{fig:r_node_1}}{\includegraphics[page=1]{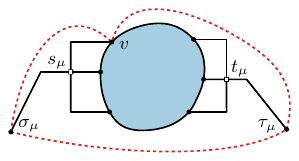}}
     \hfil
    \subcaptionbox{\nolinenumbers \label{fig:r_node_2}}{\includegraphics[page=2]{figures/r_node.pdf}}     
    \caption{The augmented $R$-node when \textbf{(a)} both poles are dummy, \textbf{(b)} one pole ($s_\mu$) is dummy.}
    \label{fig:r_node}
\end{figure}

We compute a stretchable drawing of $\overline{G}_\mu$ by applying the algorithm from \cref{sec:triconnected} to the entire pertinent graph of $\mu$. We modify this algorithm to account for the fact that each virtual edge $(u, v)$ actually corresponds to the subgraph $G_\nu$ of a child node $\nu$ with poles $s_\nu = u$ and $t_\nu = v$. 
Afterwards, if a child~$\nu$ of~$\mu$ has been reduced at a dummy pole~$s_\nu$ with unique neighbor~$u_\nu$ in its (unreduced) pertinent graph, we replace the virtual edge~$(s_\nu,t_\nu)$ in the skeleton of~$\mu$ by the path consisting of the real edge~$(s_\nu,u_\nu)$ and a virtual edge~$(u_\nu,t_\nu)$ that corresponds to the reduced pertinent graph~$G_\nu$; if~$\nu$ has been reduced at both poles, we replace both ends of the virtual edge accordingly. Note that the graph processed by the algorithm of \cref{sec:triconnected} is then not the planarization of a $1$-planar graph, as it may contain edges between two dummy vertices. Thus, we have to extend the invariants and the algorithm of \cref{sec:triconnected} to handle this.

\begin{enumerate}[label=I.\arabic*$^*$]
    \item\label{inv:new-cut}There is a cut through every contour edge, except the last edge of each dummy vertex whose neighbor is a real vertex.
    \setcounter{enumi}{3}
    \item\label{inv:new-dummy} For each dummy vertex $v_d$ in $G_k$ due to the crossing of $(v_i,v_{i'})$ and $(v_j,v_{j'})$ such that $v_{j'}$ is its last neighbor, the following hold.
        \begin{enumerate}[label=\alph*),ref=I.4$^*$(\alph*)]
            \item If $v_i\in G_k$ and~$v_i$ is real, then $(v_d,v_i)$ is drawn with at most one bend in $\Gamma_k$;
            \item If $v_{i'}\in G_k$ and~$v_{i'}$ is real, then $(v_d,v_{i'})$ is drawn with at most one bend in $\Gamma_k$;
            \item If $v_{j'}\in G_k$,~$v_{j'}$ is real, and $(v_d,v_{j'})$ is not a virtual edge, then $(v_d,v_{j'})$ is drawn without bends in $\Gamma_k$.
            \item\label{inv*:last-edge-virtual} If $v_{j'}\in G_k$,~$v_{j'}$ is real, and $(v_d,v_{j'})$ a virtual edge, then $(v_d,v_{j'})$ is drawn with at most two bends and with one horizontal segment in $\Gamma_k$.
            \item\label{inv*:dummy-last} If $v_j\in G_k$,~$v_j$ is dummy, and $v_d$ is not the last neighbor of $v_j$, then $(v_j,v_d)$ is drawn with at most three bends and with two horizontal segments.
            \item\label{inv*:dummy-not-last} Every other edge between~$v_d$ and another dummy vertex is drawn with at most two bends and at least one horizontal segment.
        \end{enumerate}
\end{enumerate}

In the port assignment phase, the port used at a dummy vertex~$v_d$ for its last edge $(v_d,v_{j'})$ was chosen as the port opposite of the port~$\pi$ used at $v_{j'}$. If~$v_{j'}$ is also dummy, then~$\pi$ is not yet defined; instead, we may choose~$\pi$ as an arbitrary bottom port at~$v_{j'}$.
When processing a virtual edge~$(v_h, v_g)$,
we do not reserve a single port for it at its two endvertices~$v_h$ and~$v_g$. Instead, we reserve as many consecutive ports as the degree of~$v_h$ and~$v_g$ in~$G_\nu$, respectively, plus one in the case in which~$(v_h,v_g)$ is a real edge, starting from the one that the algorithm described in \cref{sec:triconnected} would assign.
If $G_\nu$ has been reduced at $v_g$ or $v_h$, then we reserve a single port at that vertex.
An exception to this rule occurs when~$(v_h, v_g)$ is a virtual edge corresponding to a child $\nu$ of $\mu$ that belongs to a chain but is not a real edge of $H$. Rather than reserving~$d$ consecutive ports at~$v_g$ (where~$d$ is the degree of~$v_g$ in~$G_\nu$), we reserve the horizontal port, the first~$d$ top and the first~$d$ bottom ports in clockwise and counterclockwise order around $v_g$, respectively.
The right horizontal port of every vertex on a chain is always occupied.
This approach leverages a key property of the algorithm in \cref{sec:triconnected}: for a vertex $v_i$ on a chain, the left and right horizontal ports are assigned to its incident chain edges. Since $v_i$ has at most $\Delta-2$ outgoing edges but $\Delta-1$ available top ports, there is always at least one unassigned top~port.

\begin{figure}
    \centering
    \subcaptionbox{\nolinenumbers \label{fig:chip-real}}{\includegraphics[page=1]{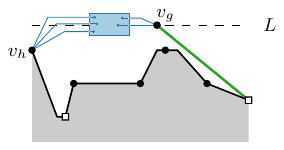}}
     \hfil
    \subcaptionbox{\nolinenumbers \label{fig:chip-chain}}{\includegraphics[page=2]{figures/chips.pdf}}     
     \hfil
    \subcaptionbox{\nolinenumbers \label{fig:chip-high}}{\includegraphics[page=3]{figures/chips.pdf}} 
    \caption{Attaching a chip on a virtual edge \textbf{(a)} incident to a degree-2 real singleton, \textbf{(b)} chain, \textbf{(c)} high-degree singleton.}
    \label{fig:chips}
\end{figure}

We now describe how to adjust the drawing algorithm of \cref{sec:triconnected} to handle edges between two dummy vertices $v_h$ and $v_g$; see \cref{fig:dummy-dummy}. If $v_h\prec v_g$, then we draw the edge $(v_h,v_g)$ as if $v_h$ was real; however, if $v_g$ is not the last neighbor of $v_h$, we start with a (very short) horizontal segment at~$v_h$. Otherwise, $v_h$ and $v_g$ are consecutive on a chain. We draw the edge locally at~$v_h$ and~$v_g$ as described in \cref{sec:triconnected}. If $(v_h,v_g)$ is critical for both $v_h$ and $v_g$, this might result in an edge with two bends. This satisfies invariants~\ref{inv*:dummy-last} and~\ref{inv*:dummy-not-last}.
Finally, if there is a dummy vertex~$v_h$ whose final edge $(v_h,v_g)$ is a virtual edge and~$v_g$ is a real vertex, then we also draw $(v_h,v_g)$ as if $v_h$ was a real vertex to satisfy invariant~\ref{inv*:last-edge-virtual}.

Let~$\nu$ be a reduced child of~$\mu$ with poles $(u_\nu,v_\nu)$, and let $(s_\nu,t_\nu)$ be the virtual edge corresponding to $\nu$. Then, either one or both of $s_\nu$ and $t_\nu$ is a dummy vertex, so we have to place one or two subdivision vertices (namely, $u_\nu$ and/or $v_\nu$) on $(s_\nu,t_\nu)$; see \cref{fig:dummy-dummy}.
By \ref{inv:new-dummy}, the edge $(s_\nu,t_\nu)$ contains at least one horizontal segment; we call the last horizontal segment in the direction from $s_\nu$ to $t_\nu$ the \emph{host}.
If $u_\nu\neq s_\nu$, we place it on the host as follows:
if $s_\nu$ and $t_\nu$ are consecutive on a chain, then we place $u_\nu$ on an arbitrary interior point of the host; otherwise, we place $u_\nu$ on the bend point of the host that is closer to $s_\nu$.
If $u_\nu=s_\nu$, then we have $v_\nu\neq t_\nu$, and we place $v_\nu$ in this position instead.
If both $u_\nu\neq s_\nu$ and $v_\nu\neq t_\nu$, then we place $v_\nu$ on an interior point of the host, between $u_\nu$ and $t_\nu$.
Note that this placement satisfies the original invariant \ref{inv:dummy}; in particular, the last edge of every dummy vertex is drawn without bends, the critical edge is drawn with two bends, and the other two edges are drawn with one bend each. 
Thus, the resulting drawing gives a 2-bend 1-planar drawing after removing the dummy vertices.

\begin{figure}
    \centering
    \includegraphics[page=9]{figures/drawing-styles.pdf}
    \caption{Drawing a dummy-dummy edge or a virtual last dummy-real edge and placing one (black) or two (gray) subdivision vertices on it.}
    \label{fig:dummy-dummy}
\end{figure}

In the following, we show how to replace the virtual edges of non-reduced nodes. 
In the drawing phase, when the virtual edge $(v_h, v_g)$ is encountered, we add the stretchable drawing of~$G_\nu$, together with the edge $(v_h, v_g)$, if this is a real edge. Since each edge connecting a pair of real vertices contains a horizontal segment and since each chip connects such vertices and is in turn stretchable, the introduction of the stretchable drawings at each virtual edge can be done similarly to~\cite{AngeliniBLM19}; see \cref{fig:chips}. 
Since the case in which~$(v_h,v_g)$ is an edge of a degree-$2$ real singleton is rather easy (\cref{fig:chip-real}), we focus on the case in which~$(v_h,v_g)$ is either an edge of a chain (\cref{fig:chip-chain}) or is incident to a high-degree real singleton (\cref{fig:chip-high}).

If $(v_h, v_g)$ is a chain edge, we position the chip between $v_h$ and $v_g$ following the same logic used for the $S$-node case between vertices $u_{i-1}$ and $u_i$.

Suppose, now, that $(v_h,v_g)$ is an edge incident to a high-degree real singleton $v_g$. 
Let $d_h$ and $d_g$ be the degrees of $v_h$ and $v_g$, respectively, in~$G_\nu$, and assume that the edge $(v_h,v_g)$ is a real edge; the case in which this edge is not real is simpler.  
In this case, our algorithm has reserved $d_h+1$ ports at $v_h$ and $d_g+1$ ports at $v_g$ for the introduction of the stretchable drawing of~$\overline{G}_\nu$. Let $\rho_h$ ($\rho_g$) be the last reserved such port incident to $v_h$ ($v_g$) that is encountered in a clockwise (counterclockwise) traversal of the ports incident to $v_h$ ($v_g$) starting from the horizontal left one. 
In the procedure with the two sweeps presented in \cref{ssec:real-highdeg}, we may need to further stretch the drawing to guarantee that the crossing point of $\rho_h$ with $L$ is to the left of the one with $\rho_g$. This enables the stretchable drawing of $\overline{G}_\nu$ to be placed (after a possible scaling) between these two crossing points. Consequently, the edges connecting poles $v_h$ and $v_g$ to the pins of the drawing of $\overline{G}_\nu$ can be drawn without crossings by applying \cref{lem:poles-addition}.
Since the edge connecting each pin of $C_{\nu}$
to its endvertex in~$\overline{G}_\nu$ is drawn with at most one bend, 
the edges from each of the poles of $v_h$ and $v_g$ are drawn with at most two bends each. 

It remains to describe the case where $\nu$ is a reduced node. As described above (see also \cref{fig:dummy-dummy}, we have placed its (real) poles on a horizontal segment. Hence, we can position the chip between the poles exactly as in the chain case; see \cref{fig:chip-chain}.

Finally, we handle the case that no separation pair with two real vertices exists. 
Let~$\langle s,t\rangle$ be a separation pair; w.l.o.g.\ $s$ is dummy. By Property~\ref{prp:poles} of~\cref{lem:planarization}, $s$ has exactly one neighbor $s'$ in one of the connected components of $G\setminus\{s,t\}$; thus, $\langle s',t\rangle$ is also a separation pair. Note that $s'$ is real and $s'\neq t$. 
If~$t$ is real, then $\langle s', t \rangle$ is a separation pair consisting of two real vertices; hence, $t$ is a dummy vertex. In this case, pick the (real) neighbor $t'$ of $t$ analogously. 
If $s'\neq t'$, then $\langle s',t'\rangle$ is a separation pair consisting of two real vertices; hence, $s'=t'$. Since $s'=t'$ is a vertex of degree $2$, we split it into two copies, say $v$ and $v'$, both of which we assume to be real. We connect $v$ to one of the neighbors of $s'=t'$, $v'$ to the other one and we make $v$ and $v'$ adjacent. This allows us to root $T$ at the $Q$-node corresponding to $(v,v')$. Once the traversal of $T$ is completed, we remove both $v$ and $v'$ and we draw the original graph as depicted in \cref{fig:exception}.

\begin{figure}
    \centering
    \includegraphics[page=8]{figures/biconnected.pdf}
    \caption{Handling the case that there is no separation pair of real vertices.}
    \label{fig:exception}
\end{figure}

We conclude this section by noting that once the stretchable drawing of $\overline{G}_\mu$ is obtained, the invariant property that one may use one more bend to connect each pin to the pole of $\mu$ is maintained. This is clearly the case when $\mu$ is an $S$- or a $P$-node (by construction). When~$\mu$ is an $R$-node, then its poles are $v_1$ and $v_2$ in the canonical order. Since in the drawing of~$G_\mu$ that we constructed, each edge incident to these two vertices has a horizontal segment which is incident to neither $v_1$ nor $v_2$, the chip $C_\mu$ can be defined in a way to guarantee the invariant property by subdividing each of these horizontal segments.

\begin{theorem}\label{thm:biconnectedcase}
For any $\Delta \geq 4$, every set $S$ of $\Delta$ slopes is universal for $2$-bend $1$-planar drawings of biconnected $1$-planar graphs with maximum degree $\Delta$. Also, for any such graph on $n$ vertices, a $2$-bend $1$-planar drawing on $S$ can be computed in $O(n)$ time if a $1$-planar embedding is specified as part of the input.    
\end{theorem}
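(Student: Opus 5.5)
The plan is to assemble \cref{thm:biconnectedcase} from the construction of \cref{sec:biconnected}, in the same way that \cref{thm:triconnectedcase} was assembled from the algorithm of \cref{sec:triconnected}. Given a biconnected $1$-plane graph $H$, I first reembed it with \cref{lem:planarization}. Its planarization $G$ is then biconnected, no cut vertex of $G$ is a dummy vertex, and every dummy pole satisfies Property~\ref{prp:poles}. Next I compute the SPQR-tree $\mathcal{T}$ of $G$ and root it at a node whose poles are both real. If no such separation pair exists, I apply the vertex-splitting trick of \cref{fig:exception}. After that I mark the augmented nodes, the reduced nodes and the satellite $Q$-nodes.

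I would then argue by induction over a bottom-up traversal of $\mathcal{T}$. For every non-root node $\mu$ the invariant is that the chip $C_\mu$ contains a planar drawing of $\overline{G}_\mu$ on $S$ with these properties: pins lie on the left and right sides and are incident to horizontal segments; the drawing is stretchable; the embedding of $G$ restricted to $\overline{G}_\mu$ is preserved; and every pin can be joined to its pole by spending at most one more bend. For each case I would check this invariant directly.
\begin{itemize}
\item $P$-nodes: stack the chips, which only uses horizontal elongation.
\item $S$-nodes: rays $\rho_t,\rho'_t,\rho_b,\rho'_b$ together with \cref{lem:poles-addition}; in the augmented variant, the dummy poles get opposite ports.
\item $R$-nodes: the modified triconnected algorithm with the invariants \ref{inv:new-cut}--\ref{inv:new-dummy}. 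Here I would point out that the auxiliary slope and the auxiliary edges vanish once the poles are removed, so every segment of $C_\mu$ has a slope in $S$.
\end{itemize}
For port counting at real vertices I reuse \cref{lm:assignement}. In addition, a vertex on a chain keeps a spare top port, and this spare port is what the reservations for the chips need. At the root $Q$-node, \cref{lem:poles-addition} attaches the poles.

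For correctness of bends I would argue as in the proof of \cref{thm:triconnectedcase}. Every edge of $H$ between real vertices that lies inside a chip receives at most one bend inside the chip and at most one bend at the pole. Every crossed edge of $H$ consists of two half-edges at a dummy vertex that use opposite ports there. By \ref{inv:dummy}, which holds again after subdivision vertices are placed on the host segments of dummy--dummy and virtual last edges, the two halves together carry at most two bends. Removing the dummy vertices therefore gives a $2$-bend drawing of $H$ on $S$ in which each edge is crossed at most once. For the running time I note that the planarization, the SPQR-tree and the canonical orderings of all skeletons take total linear time, since the skeleton sizes sum to $O(n)$. Stretching can be done implicitly, by storing horizontal offsets and fixing coordinates at the end, which also gives $O(n)$.

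The main obstacle I expect is the $R$-node step. There I must show that the reserved blocks of consecutive ports, especially at high-degree real singletons and at chain vertices that meet non-real virtual edges, never exceed the $\Delta$ slopes. I also have to check that the two-sweep concurrency argument still works after the extra stretch needed to separate $\rho_h$ and $\rho_g$. My first attempt would be to count, at each real vertex $v$, its degree in $G$ as the sum over the incident children $\nu$ of $\deg_{G_\nu}(v)$, plus the real edges. Together with the fact that the edges to $P$-node children are merged into a single reservation, this gives the bound of \cref{lm:assignement} unchanged.
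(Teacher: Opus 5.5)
Your proposal is correct and follows essentially the same route as the paper. The paper gives no separate proof for this theorem: its proof is the construction of \cref{sec:biconnected}, which you reconstruct step by step. That construction reembeds via \cref{lem:planarization}, roots the SPQR-tree at a real separation pair (with the vertex-splitting fallback), and introduces augmented and reduced nodes with satellites. It then builds stretchable chips bottom-up for $P$-, $S$- and $R$-nodes, where the $R$-node case uses the modified triconnected algorithm and its auxiliary slope disappears with the poles. Finally it removes the dummy vertices using \ref{inv:dummy}.

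Your implicit-offset argument for the $O(n)$ bound is more explicit than anything in the paper, which simply asserts the running time. Read your ``one bend inside the chip plus one at the pole'' remark as applying only to pin edges: edges internal to an $R$-node chip may already carry two bends, and for those the bound comes from the triconnected construction.
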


\section{Conclusions}
\label{sec:conclusions}

In this work, we studied the $2$-bend slope number of $1$-planar graphs, thereby providing the first investigation of the problem for a beyond-planar graph class of
arbitrary degree.
We showed that every biconnected $1$-planar graph of maximum degree $\Delta$ admits a $2$-bend drawing using any given set of $\Delta$ slopes. 
We strongly believe that our result extends to simply connected $1$-planar graphs using a similar approach as the one in \cite{AngeliniBLM19} by ensuring that the biconnected components are drawn with enough consecutive ports available at the cut vertices.
We also believe that our algorithm can be extended to draw $k$-planar graphs with $\Delta$ slopes and $2k$ bends per edge, for $k>1$.

Several directions, however, remain unexplored.
A key open problem is whether every $1$-planar graph admits a $1$-bend drawing where the slope number is bounded by a function of $\Delta$.
Another important question concerns the area requirement. Although our algorithm uses a bounded number of slopes and at most two bends per edge, it is unclear whether the output drawings fit on polynomial-area grids.
A different direction is to study trade-offs between the number of slopes and the number of bends per edge, in particular, whether similar results can be obtained with fewer bends (even for subclasses of $1$-planar graphs). 
Finally, as a first step towards other beyond-planar graph classes, optimal 2-planar~\cite{DBLP:conf/compgeom/Bekos0R17} and framed graphs~\cite{DBLP:conf/compgeom/BekosLGGMR20} appear to be particularly promising.

\bibliographystyle{plainurl}
\bibliography{bibliography}

@article{KPP2013,
author = {Keszegh, Bal\'{a}zs and Pach, J\'{a}nos and P\'{a}lv\"{o}lgyi, D\"{o}m\"{o}t\"{o}r},
title = {Drawing Planar Graphs of Bounded Degree with Few Slopes},
journal = {SIAM J. Discr. Math.},
volume = {27},
number = {2},
pages = {1171-1183},
year = {2013},
doi = {10.1137/100815001}
}

@InProceedings{KPP2010GD,
author="Keszegh, Bal{\'a}zs
and Pach, J{\'a}nos
and P{\'a}lv{\"o}lgyi, D{\"o}m{\"o}t{\"o}r",
editor="Brandes, Ulrik
and Cornelsen, Sabine",
doi="10.1007/978-3-642-18469-7_27",
title="Drawing Planar Graphs of Bounded Degree with Few Slopes",
booktitle="Graph Drawing (GD 2011)",
year="2011",
publisher="Springer",
series="LNCS",
pages="293--304"
}

@article{BiedlKant1998,
title = {A better heuristic for orthogonal graph drawings},
journal = {Comput. Geom.},
volume = {9},
number = {3},
pages = {159-180},
year = {1998},
issn = {0925-7721},
doi = {https://doi.org/10.1016/S0925-7721(97)00026-6},
author = {Therese Biedl and Goos Kant}
}

@article{FraysseixPP90,
  author       = {Hubert de Fraysseix and
                  J{\'{a}}nos Pach and
                  Richard Pollack},
  title        = {How to draw a planar graph on a grid},
  journal      = {Comb.},
  volume       = {10},
  number       = {1},
  pages        = {41--51},
  year         = {1990},
  doi          = {10.1007/BF02122694}
}

@article{AngeliniBLM19,
  author       = {Patrizio Angelini and
                  Michael A. Bekos and
                  Giuseppe Liotta and
                  Fabrizio Montecchiani},
  title        = {Universal Slope Sets for 1-Bend Planar Drawings},
  journal      = {Algorithmica},
  volume       = {81},
  number       = {6},
  pages        = {2527--2556},
  year         = {2019},
  doi          = {10.1007/S00453-018-00542-9}
}

@book{2013gd,
  editor       = {Roberto Tamassia},
  title        = {Handbook on Graph Drawing and Visualization},
  publisher    = {Chapman and Hall/CRC},
  year         = {2013},
  url          = {https://www.crcpress.com/Handbook-of-Graph-Drawing-and-Visualization/Tamassia/9781584884125},
  isbn         = {978-1-5848-8412-5}
}

@book{Juenger04,
  editor       = {Michael J{\"{u}}nger and
                  Petra Mutzel},
  title        = {Graph Drawing Software},
  publisher    = {Springer},
  year         = {2004},
  doi          = {10.1007/978-3-642-18638-7},
  isbn         = {978-3-642-62214-4}
}

@article{Kant96,
  author       = {Goos Kant},
  title        = {Drawing Planar Graphs Using the Canonical Ordering},
  journal      = {Algorithmica},
  volume       = {16},
  number       = {1},
  pages        = {4--32},
  year         = {1996},
  doi          = {10.1007/BF02086606}
}

@Book{Felsner04,
	author    = {Stefan Felsner},
	title     = {Geometric Graphs and Arrangements},
	publisher = {Vieweg},
	year      = {2004},
	series    = {Advanced Lectures in Mathematics},
	doi		  = {10.1007/978-3-322-80303-0}
}

@inproceedings{Schnyder90,
  author    = {Walter Schnyder},
  title     = {Embedding Planar Graphs on the Grid},
  booktitle = {Discrete Algorithms ({SODA} 1990)},
  pages     = {138--148},
  publisher = {{ACM-SIAM}},
  year      = {1990},
  doi       = {10.5555/320176.320191}
}

@article{WadeChu1994,
	author       = {Greg A. Wade and
                  Jiang{-}Hsing Chu},
    title        = {Drawability of Complete Graphs Using a Minimal Slope Set},
    journal = {The Comput. J.},
    volume       = {37},
    number       = {2},
    pages        = {139--142},
    year         = {1994},
    doi          = {10.1093/COMJNL/37.2.139}
}

@article{Tamassia1987,
	author = {Tamassia, Roberto},
	title = {On Embedding a Graph in the Grid with the Minimum Number of Bends},
	journal = {SIAM J. Comput.},
	volume = {16},
	number = {3},
	pages = {421-444},
	year = {1987},
	doi = {10.1137/0216030}
}

@article{BekosGKK15, 
	title={Planar Octilinear Drawings with One Bend Per Edge}, 
	volume={19},
	DOI={10.7155/jgaa.00369},
	number={2}, 
	journal={J. Graph Algorithms Appl.}, 
	author={Bekos, Michael A. and Gronemann, Martin and Kaufmann, Michael and Krug, Robert}, 
	year={2015}, 
	pages={657--680} 
}

@inproceedings{BekosGKK14,
  author       = {Michael A. Bekos and
                  Martin Gronemann and
                  Michael Kaufmann and
                  Robert Krug},
  editor       = {Christian A. Duncan and
                  Antonios Symvonis},
  title        = {Planar Octilinear Drawings with One Bend Per Edge},
  booktitle    = {Graph Drawing ({GD} 2014)},
  series       = {LNCS},
  volume       = {8871},
  pages        = {331--342},
  publisher    = {Springer},
  year         = {2014},
  doi          = {10.1007/978-3-662-45803-7_28},
}

@inproceedings{DBLP:conf/compgeom/Bekos0R17,
  author       = {Michael A. Bekos and
                  Michael Kaufmann and
                  Chrysanthi N. Raftopoulou},
  editor       = {Boris Aronov and
                  Matthew J. Katz},
  title        = {On Optimal 2- and 3-Planar Graphs},
  booktitle    = {Computational Geometry (SoCG 2017)},
  series       = {LIPIcs},
  pages        = {16:1--16:16},
  publisher    = {Schloss Dagstuhl - Leibniz-Zentrum f{\"{u}}r Informatik},
  year         = {2017},
  doi          = {10.4230/LIPICS.SOCG.2017.16},
}

@inproceedings{DBLP:conf/compgeom/BekosLGGMR20,
  author       = {Michael A. Bekos and
                  Giordano {Da Lozzo} and
                  Svenja Griesbach and
                  Martin Gronemann and
                  Fabrizio Montecchiani and
                  Chrysanthi N. Raftopoulou},
  editor       = {Sergio Cabello and
                  Danny Z. Chen},
  title        = {Book Embeddings of Nonplanar Graphs with Small Faces in Few Pages},
  booktitle    = {Computational Geometry (SoCG 2020)},
  series       = {LIPIcs},
  pages        = {16:1--16:17},
  publisher    = {Schloss Dagstuhl - Leibniz-Zentrum f{\"{u}}r Informatik},
  year         = {2020},
  doi          = {10.4230/LIPICS.SOCG.2020.16},
}

@InProceedings{DiGiacomoLM2014,
	author={{Di Giacomo}, Emilio
	and Liotta, Giuseppe
	and Montecchiani, Fabrizio},
	editor={Pardo, Alberto
	and Viola, Alfredo},
	title={The Planar Slope Number of Subcubic Graphs},
	booktitle={Theoretical Informatics (LATIN 2014)},
	year={2014},
	publisher={Springer},
	pages={132--143},
    doi = {10.1007/978-3-642-54423-1_12},
}

@article{GiacomoLM18,
	author       = {{Di Giacomo}, Emilio
	and Liotta, Giuseppe
	and Montecchiani, Fabrizio},
	title        = {Drawing subcubic planar graphs with four slopes and optimal angular
	resolution},
	journal      = {Theor. Comput. Sci.},
	volume       = {714},
	pages        = {51--73},
	year         = {2018},
	doi          = {10.1016/J.TCS.2017.12.004}
}

@article{GiacomoLM15,
	author       = {{Di Giacomo}, Emilio
	and Liotta, Giuseppe
	and Montecchiani, Fabrizio},
	title        = {Drawing Outer 1-planar Graphs with Few Slopes},
	journal      = {J. Graph Algorithms Appl.},
	volume       = {19},
	number       = {2},
	pages        = {707--741},
	year         = {2015},
	doi          = {10.7155/JGAA.00376}
}

@inproceedings{GiacomoLM14,
	author       = {{Di Giacomo}, Emilio
	and Liotta, Giuseppe
	and Montecchiani, Fabrizio},
	editor       = {Christian A. Duncan and
	Antonios Symvonis},
	title        = {Drawing Outer 1-planar Graphs with Few Slopes},
	booktitle    = {Graph Drawing ({GD} 2014)},
	series       = {LNCS},
	volume       = {8871},
	pages        = {174--185},
	publisher    = {Springer},
	year         = {2014},
	doi          = {10.1007/978-3-662-45803-7_15}
}

@article{GiacomoLM20,
	author       = {{Di Giacomo}, Emilio
	and Liotta, Giuseppe
	and Montecchiani, Fabrizio},
	title        = {1-bend upward planar slope number of {SP}-digraphs},
	journal      = {Comput. Geom.},
	volume       = {90},
	pages        = {101628},
	year         = {2020},
	doi          = {10.1016/J.COMGEO.2020.101628}
}

@inproceedings{GiacomoLM16,
author       = {{Di Giacomo}, Emilio
and Liotta, Giuseppe
and Montecchiani, Fabrizio},
editor       = {Yifan Hu and
Martin N{\"{o}}llenburg},
title        = {1-Bend Upward Planar Drawings of SP-Digraphs},
booktitle    = {Graph Drawing and Network Visualization ({GD} 2016)},
series       = {LNCS},
volume       = {9801},
pages        = {123--130},
publisher    = {Springer},
year         = {2016},
doi          = {10.1007/978-3-319-50106-2_10},
}

@inproceedings{ChaplickLGLM21,
  author       = {Steven Chaplick and
                  Giordano Da Lozzo and
                  Emilio Di Giacomo and
                  Giuseppe Liotta and
                  Fabrizio Montecchiani},
  editor       = {Anna Lubiw and
                  Mohammad R. Salavatipour},
  title        = {Planar Drawings with Few Slopes of {H}alin Graphs and Nested Pseudotrees},
  booktitle    = {Algorithms and Data Structures (WADS 2021)},
  series       = {LNCS},
  volume       = {12808},
  pages        = {271--285},
  publisher    = {Springer},
  year         = {2021},
  doi          = {10.1007/978-3-030-83508-8_20},
}

@article{ChaplickLGLM24,
	author       = {Steven Chaplick and
	Giordano {Da Lozzo} and
	Emilio {Di Giacomo} and
	Giuseppe Liotta and
	Fabrizio Montecchiani},
	title        = {Planar Drawings with Few Slopes of {H}alin Graphs and Nested Pseudotrees},
	journal      = {Algorithmica},
	volume       = {86},
	number       = {8},
	pages        = {2413--2447},
	year         = {2024},
	doi          = {10.1007/S00453-024-01230-7}
}

@article{KindermannMSS21,
author       = {Philipp Kindermann and
Fabrizio Montecchiani and
Lena Schlipf and
Andr{\'{e}} Schulz},
title        = {Drawing Subcubic 1-Planar Graphs with Few Bends, Few Slopes, and Large
Angles},
journal      = {J. Graph Algorithms Appl.},
volume       = {25},
number       = {1},
pages        = {1--28},
year         = {2021},
doi          = {10.7155/JGAA.00547}
}

@inproceedings{KindermannMSS18,
author       = {Philipp Kindermann and
Fabrizio Montecchiani and
Lena Schlipf and
Andr{\'{e}} Schulz},
editor       = {Therese Biedl and
Andreas Kerren},
title        = {Drawing Subcubic 1-Planar Graphs with Few Bends, Few Slopes, and Large
Angles},
booktitle    = {Graph Drawing and Network Visualization (GD 2018)},
series       = {LNCS},
volume       = {11282},
pages        = {152--166},
publisher    = {Springer},
year         = {2018},
doi          = {10.1007/978-3-030-04414-5_11}
}

@article{LenhartLMN23,
author       = {William J. Lenhart and
Giuseppe Liotta and
Debajyoti Mondal and
Rahnuma Islam Nishat},
title        = {Drawing Partial 2-Trees with Few Slopes},
journal      = {Algorithmica},
volume       = {85},
number       = {5},
pages        = {1156--1175},
year         = {2023},
doi          = {10.1007/S00453-022-01065-0}
}

@article{MukkamalaSzegedy2009,
	author       = {Padmini Mukkamala and
	Mario Szegedy},
	title        = {Geometric representation of cubic graphs with four directions},
	journal      = {Comput. Geom.},
	volume       = {42},
	number       = {9},
	pages        = {842--851},
	year         = {2009},
	doi          = {10.1016/J.COMGEO.2009.01.005}
}

@inproceedings{KeszeghPPT06,
	author       = {Bal{\'{a}}zs Keszegh and
	J{\'{a}}nos Pach and
	D{\"{o}}m{\"{o}}t{\"{o}}r P{\'{a}}lv{\"{o}}lgyi and
	G{\'{e}}za T{\'{o}}th},
	editor       = {Michael Kaufmann and
	Dorothea Wagner},
	title        = {Drawing Cubic Graphs with at Most Five Slopes},
	booktitle    = {Graph Drawing and Network Visualization ({GD} 2006)},
	series       = {LNCS},
	volume       = {4372},
	pages        = {114--125},
	publisher    = {Springer},
	year         = {2006},
	doi          = {10.1007/978-3-540-70904-6_13}
}

@article{KeszeghPPT08,
	author       = {Bal{\'{a}}zs Keszegh and
	J{\'{a}}nos Pach and
	D{\"{o}}m{\"{o}}t{\"{o}}r P{\'{a}}lv{\"{o}}lgyi and
	G{\'{e}}za T{\'{o}}th},
	title        = {Drawing cubic graphs with at most five slopes},
	journal      = {Comput. Geom.},
	volume       = {40},
	number       = {2},
	pages        = {138--147},
	year         = {2008},
	doi          = {10.1016/J.COMGEO.2007.05.003}
}

@book{BattistaETT99,
  author       = {Giuseppe Di Battista and
                  Peter Eades and
                  Roberto Tamassia and
                  Ioannis G. Tollis},
  title        = {Graph Drawing: Algorithms for the Visualization of Graphs},
  publisher    = {Prentice-Hall},
  year         = {1999},
  isbn         = {0-13-301615-3},
  bibsource    = {dblp computer science bibliography, https://dblp.org}
}

@inproceedings{MukkamalaP11,
	author       = {Padmini Mukkamala and
	D{\"{o}}m{\"{o}}t{\"{o}}r P{\'{a}}lv{\"{o}}lgyi},
	editor       = {Marc J. van Kreveld and
	Bettina Speckmann},
	title        = {Drawing Cubic Graphs with the Four Basic Slopes},
	booktitle    = {Graph Drawing (GD 2011)},
	series       = {LNCS},
	volume       = {7034},
	pages        = {254--265},
	publisher    = {Springer},
	year         = {2011},
	doi          = {10.1007/978-3-642-25878-7_25}
}

@article{DujmovicSW07,
	author       = {Vida Dujmovic and
	Matthew Suderman and
	David R. Wood},
	title        = {Graph drawings with few slopes},
	journal      = {Comput. Geom.},
	volume       = {38},
	number       = {3},
	pages        = {181--193},
	year         = {2007},
	doi          = {10.1016/J.COMGEO.2006.08.002}
}

@article{PachP06,
	author       = {J{\'{a}}nos Pach and
	D{\"{o}}m{\"{o}}t{\"{o}}r P{\'{a}}lv{\"{o}}lgyi},
	title        = {Bounded-Degree Graphs can have Arbitrarily Large Slope Numbers},
	journal      = {Electron. J. Comb.},
	volume       = {13},
	number       = {1},
	year         = {2006},
	doi          = {10.37236/1139}
}

@article{DujmovicESW07,
	author       = {Vida Dujmovic and
	David Eppstein and
	Matthew Suderman and
	David R. Wood},
	title        = {Drawings of planar graphs with few slopes and segments},
	journal      = {Comput. Geom.},
	volume       = {38},
	number       = {3},
	pages        = {194--212},
	year         = {2007},
	doi          = {10.1016/J.COMGEO.2006.09.002}
}

@inproceedings{JelinekJKLTV09,
	author       = {V{\'{\i}}t Jel{\'{\i}}nek and
	Eva Jel{\'{\i}}nkov{\'{a}} and
	Jan Kratochv{\'{\i}}l and
	Bernard Lidick{\'{y}} and
	Marek Tesar and
	Tom{\'{a}}s Vyskocil},
	editor       = {David Eppstein and
	Emden R. Gansner},
	title        = {The Planar Slope Number of Planar Partial 3-Trees of Bounded Degree},
	booktitle    = {Graph Drawing and Network Visualization ({GD} 2009)},
	series       = {LNCS},
	volume       = {5849},
	pages        = {304--315},
	publisher    = {Springer},
	year         = {2009},
	doi          = {10.1007/978-3-642-11805-0_29}
}

@article{JelinekJKLTV13,
	author       = {V{\'{\i}}t Jel{\'{\i}}nek and
	Eva Jel{\'{\i}}nkov{\'{a}} and
	Jan Kratochv{\'{\i}}l and
	Bernard Lidick{\'{y}} and
	Marek Tesar and
	Tom{\'{a}}s Vyskocil},
	title        = {The Planar Slope Number of Planar Partial 3-Trees of Bounded Degree},
	journal      = {Graphs Comb.},
	volume       = {29},
	number       = {4},
	pages        = {981--1005},
	year         = {2013},
	doi          = {10.1007/S00373-012-1157-Z}
}

@article{KnauerMW14,
	author       = {Kolja B. Knauer and
	Piotr Micek and
	Bartosz Walczak},
	title        = {Outerplanar graph drawings with few slopes},
	journal      = {Comput. Geom.},
	volume       = {47},
	number       = {5},
	pages        = {614--624},
	year         = {2014},
	doi          = {10.1016/J.COMGEO.2014.01.003}
}

@inproceedings{KlawitterZ21,
	author       = {Jonathan Klawitter and
	Johannes Zink},
	editor       = {Helen C. Purchase and
	Ignaz Rutter},
	title        = {Upward Planar Drawings with Three and More Slopes},
	booktitle    = {Graph Drawing and Network Visualization ({GD} 2021)},
	series       = {LNCS},
	volume       = {12868},
	pages        = {149--165},
	publisher    = {Springer},
	year         = {2021},
	doi          = {10.1007/978-3-030-92931-2_11}
}

@article{KlawitterZ23,
	author       = {Jonathan Klawitter and
	Johannes Zink},
	title        = {Upward Planar Drawings with Three and More Slopes},
	journal      = {J. Graph Algorithms Appl.},
	volume       = {27},
	number       = {2},
	pages        = {49--70},
	year         = {2023},
	doi          = {10.7155/JGAA.00617}
}

@article{KlawitterM22,
	author       = {Jonathan Klawitter and
	Tamara Mchedlidze},
	title        = {Upward planar drawings with two slopes},
	journal      = {J. Graph Algorithms Appl.},
	volume       = {26},
	number       = {1},
	pages        = {171--198},
	year         = {2022},
	doi          = {10.7155/JGAA.00587}
}

@inproceedings{BekosGDLM18,
	author       = {Michael A. Bekos and
	Emilio {Di Giacomo} and
	Walter Didimo and
	Giuseppe Liotta and
	Fabrizio Montecchiani},
	editor       = {Therese Biedl and
	Andreas Kerren},
	title        = {Universal Slope Sets for Upward Planar Drawings},
	booktitle    = {Graph Drawing and Network Visualization ({GD} 2018)},
	series       = {LNCS},
	volume       = {11282},
	pages        = {77--91},
	publisher    = {Springer},
	year         = {2018},
	doi          = {10.1007/978-3-030-04414-5_6}
}

@article{BekosGDLM22,
	author       = {Michael A. Bekos and
	Emilio {Di Giacomo} and
	Walter Didimo and
	Giuseppe Liotta and
	Fabrizio Montecchiani},
	title        = {Universal Slope Sets for Upward Planar Drawings},
	journal      = {Algorithmica},
	volume       = {84},
	number       = {9},
	pages        = {2556--2580},
	year         = {2022},
	doi          = {10.1007/S00453-022-00975-3}
}

@inproceedings{BrucknerKM19,
	author       = {Guido Br{\"{u}}ckner and
	Nadine Davina Krisam and
	Tamara Mchedlidze},
	editor       = {Daniel Archambault and
	Csaba D. T{\'{o}}th},
	title        = {Level-Planar Drawings with Few Slopes},
	booktitle    = {Graph Drawing and Network Visualization ({GD} 2019)},
	series       = {LNCS},
	volume       = {11904},
	pages        = {559--572},
	publisher    = {Springer},
	year         = {2019},
	doi          = {10.1007/978-3-030-35802-0_42}
}

@article{BrucknerKM22,
	author       = {Guido Br{\"{u}}ckner and
	Nadine Davina Krisam and
	Tamara Mchedlidze},
	title        = {Level-Planar Drawings with Few Slopes},
	journal      = {Algorithmica},
	volume       = {84},
	number       = {1},
	pages        = {176--196},
	year         = {2022},
	doi          = {10.1007/S00453-021-00884-X}
}

@inproceedings{KnauerMW12,
  author       = {Kolja B. Knauer and
                  Piotr Micek and
                  Bartosz Walczak},
  editor       = {Joachim Gudmundsson and
                  Juli{\'{a}}n Mestre and
                  Taso Viglas},
  title        = {Outerplanar Graph Drawings with Few Slopes},
  booktitle    = {Computing and Combinatorics ({COCOON} 2012)},
  series       = {LNCS},
  volume       = {7434},
  pages        = {323--334},
  publisher    = {Springer},
  year         = {2012},
  doi          = {10.1007/978-3-642-32241-9_28},
}

@inproceedings{KnauerW16,
  author       = {Kolja  B. Knauer and
                  Bartosz Walczak},
  editor       = {Evangelos Kranakis and
                  Gonzalo Navarro and
                  Edgar Ch{\'{a}}vez},
  title        = {Graph Drawings with One Bend and Few Slopes},
  booktitle    = {Theoretical Informatics ({LATIN} 2016)},
  series       = {LNCS},
  volume       = {9644},
  pages        = {549--561},
  publisher    = {Springer},
  year         = {2016},
  doi          = {10.1007/978-3-662-49529-2_41}
}

@article{DidimoLM19,
  author       = {Walter Didimo and
                  Giuseppe Liotta and
                  Fabrizio Montecchiani},
  title        = {A Survey on Graph Drawing Beyond Planarity},
  journal      = {{ACM} Comput. Surv.},
  volume       = {52},
  number       = {1},
  pages        = {4:1--4:37},
  year         = {2019},
  url          = {https://doi.org/10.1145/3301281},
  doi          = {10.1145/3301281},
  bibsource    = {dblp computer science bibliography, https://dblp.org}
}

@article{KobourovLM17,
  author       = {Stephen G. Kobourov and
                  Giuseppe Liotta and
                  Fabrizio Montecchiani},
  title        = {An annotated bibliography on 1-planarity},
  journal      = {Comput. Sci. Rev.},
  volume       = {25},
  pages        = {49--67},
  year         = {2017},
  doi          = {10.1016/J.COSREV.2017.06.002},
}

@article{Ringel1965,
  title   = {Ein {S}echsfarbenproblem auf der {K}ugel},
  journal = {Abh. Math. Sem. Uni. Hamburg},
  volume  = {29},
  number  = {1},
  pages   = {107--117},
  year    = {1965},
  issn    = {1865-8784},
  doi     = {10.1007/BF02996313},
  author  = {Gerhard Ringel}
}

@inproceedings{BekosKKP26,
  author       = {Michael A. Bekos and
                  Eleni Katsanou and
                  Philipp Kindermann and
                  Maria Eleni Pavlidi},
  editor       = {Pierre Fraigniaud},
  title        = {How Many Slopes Does Polynomial Area Cost?},
  booktitle    = {Scandinavian Algorithm Theory ({SWAT} 2026)},
  series       = {LIPIcs},
  volume       = {370},
  pages        = {6:1--6:18},
  publisher    = {Schloss Dagstuhl - Leibniz-Zentrum f{\"{u}}r Informatik},
  year         = {2026},
  doi          = {10.4230/LIPICS.SWAT.2026.6},
}

@inproceedings{BattistaT90,
author = {Battista, Giuseppe Di and Tamassia, Roberto},
title = {On-Line Graph Algorithms with {SPQR}-Trees},
year = {1990},
publisher = {Springer},
series={LNCS},
booktitle = {Automata, Languages and Programming (ICALP 1990)},
pages = {598–611},
numpages = {14},
}

@inproceedings{ArgyriouCF0NORW18,
  author       = {Evmorfia N. Argyriou and
                  Sabine Cornelsen and
                  Henry F{\"{o}}rster and
                  Michael Kaufmann and
                  Martin N{\"{o}}llenburg and
                  Yoshio Okamoto and
                  Chrysanthi N. Raftopoulou and
                  Alexander Wolff},
  editor       = {Therese Biedl and
                  Andreas Kerren},
  title        = {Orthogonal and Smooth Orthogonal Layouts of 1-Planar Graphs with Low
                  Edge Complexity},
  booktitle    = {Graph Drawing and Network Visualization ({GD} 2018)},
  series       = {LNCS},
  pages        = {509--523},
  publisher    = {Springer},
  year         = {2018},
  doi          = {10.1007/978-3-030-04414-5_36},
}

\end{document}